\newtheorem{theorem}{Theorem}[section]
\newtheorem{proposition}[theorem]{Proposition}
\newenvironment{proof}[1][Proof]{\begin{trivlist}
		\item[\hskip \labelsep {\bfseries #1}]}{\end{trivlist}}
\newcommand{\diag}{\mathop{\mathrm{diag}}} 
    \def\ps@pprintTitle{%
      \let\@oddhead\@empty
      \let\@evenhead\@empty
      \let\@oddfoot\@empty
      \let\@evenfoot\@oddfoot
    }
\long\def\symbolfootnote[#1]#2{\begingroup%
\def\thefootnote{\fnsymbol{footnote}}\footnote[#1]{#2}\endgroup}
\begin{document}

\begin{frontmatter}

\title{Structure preserving algorithms for simulation of linearly damped acoustic systems\symbolfootnote[2]
{NOTICE: The archived file is not the final published version of the article: V. Chatziioannou, Structure preserving algorithms for simulation of linearly damped acoustic systems, in: ``Journal of Numerical Analysis, Industrial and Applied Mathematics (JNAIAM)'', Vol. 13 (2019) 33--54, \copyright European Society of Computational Methods in Sciences and Engineering. The definitive publisher-authenticated version is available online at: \url{http://jnaiam.org/exit.php?url_id=239&entry_id=141}}}
\author{Vasileios Chatziioannou\corref{cor1}}\cortext[cor1]{Corresponding author~Tel.: +43 1 71155 4313.}
\ead{chatziioannou@mdw.ac.at}
\address{Department of Music Acoustics (IWK), University of Music and performing Arts Vienna, Anton-von-Webern-Platz 1, 1030 Vienna, Austria\corref{cor2}}


%
%
%
\begin{abstract}

Energy methods for constructing time-stepping algorithms are of increased interest in application to nonlinear problems, since numerical stability can be inferred from the conservation of the system energy. Alternatively, symplectic integrators may be constructed that preserve the symplectic form of the system. This methodology has been established for Hamiltonian systems, with numerous applications in engineering problems. In this paper an extension of such methods to non-conservative acoustic systems is presented. Discrete conservation laws, equivalent to that of energy-conserving schemes, are derived for systems with linear damping, incorporating the action of external forces. Furthermore the evolution of the symplectic structure is analysed in the continuous and the discrete case. Existing methods are examined and novel methods are designed using a lumped oscillator as an elemental model. The proposed methodology is extended to the case of distributed systems and exemplified through a case study of a vibrating string bouncing against a rigid obstacle.

\end{abstract}
\end{frontmatter}

\section{Introduction}\label{sec;intro}

Time-stepping methods have seen increased attention in the numerical simulation of mechanical systems since the continuous advance in computer hardware allows to analyse systems of increasing complexity. Of particular interest, when employing such methods for computer simulation, is the stability of the numerical algorithms. Especially in the case of nonlinear systems special care needs to be taken in order to ensure a bound on the underlying model variables.

To this cause energy preserving schemes have been developed for the discretisation of ordinary and partial differential equations \cite{greenspan,hairer_EC10,li95_energy,macias2018pde,quispel08} and systems thereof \cite{bacchini2018ode,stefan_strings,gonzalez96}. In the field of music acoustics, from where the case studies treated in this paper are drawn, conservation of numerical energy has been the central theme in recent implementation designs involving nonlinear interactions \cite{stefan_aca14,chabassier_cmame,jsv}, especially when non-analytic forces are present. Numerical stability is thus ensured by the conservation of the system energy (or an energy-like positive definite quantity).

The particular requirements of physical modelling sound synthesis applications dictate the use of simple, low-order integration methods. This is due to perceptual considerations \cite{stefan_NSS}, as well as uncertainties in the input parameters of the physical models (especially in the case of material parameters and damping factors).
Furthermore, the aim to employ the constructed models in real-time synthesis applications requires the use of computationally efficient models with proven stability properties. Nevertheless, a brief discussion will also follow on the construction of higher-order methods that share the same preservation properties with the presented methods. Finally, the requirement for full audio-bandwidth applications and the necessity to couple the derived numerical schemes to other digital representations of sounding objects poses the restriction of a constant sampling rate (usually the audio sampling rate of 44.1 kHz). For this reason, methods involving variable step-sizes (see, e.g.\ \cite{brugnano_var_dt,hairer_EC10}) are rarely considered in Music Acoustics applications.

In this paper the problematic of energy conserving schemes is transferred to systems with linear damping. When frictional forces are present, the system energy is not constant any more and alternative conserved quantities need to be identified in order to evaluate the behaviour of numerical simulations \cite{celledoni_AVF,laburta15}. Since dissipative terms are expected to assist towards solutions not blowing up in finite time, stability is of a lesser concern here, compared to the case of conservative (undamped) systems. Of primary interest is to obtain numerical algorithms that can preserve structural invariants inherent to the differential equations. That is, besides obtaining a guarantee for numerical stability, the existence of conserved numerical quantities, corresponding to an analytical counterpart, may serve as ``\emph{a criterion to judge the success of a numerical simulation}'' \cite{li95_energy}.

To this cause, an extensive analysis is initially presented on the damped harmonic oscillator in Section \ref{sec;lumped}. Conservation laws are derived both in the continuous and in the discrete domain and several discretisation methods are discussed. Their stability is analysed in terms of the evolution of the energy and the symplectic structure of the system. It is shown that the impulse invariance method \cite{pollock} (frequently used in digital signal processing applications) and the Caldirola-Kanai approach \cite{caldirola,kanai} (used in quantum mechanics) may lead to the design of conformally symplectic mappings \cite{wojtkowski}. A way to visualise such mappings (following the treatment of symplectic mappings in the case of conservative systems) is presented. Section \ref{sec;nl} applies the proposed methodology to the case of nonlinear systems and introduces the incorporation of external forces to the presented approach, including a case study of a clarinet reed simulation. Section \ref{sec;dist} presents an extension to the case of infinite dimensional systems, focusing on the simulation of a string-barrier collision and Section \ref{sec;conc} summarises the main findings of this work.

\subsection{Definitions}\label{sec;def}

Consider a system of Hamiltonian ordinary differential equations
\begin{equation}\label{HODE}
\dfrac{\textrm{d}\mathbf{y}}{\textrm{d}t}=\dfrac{\partial H}{\partial \mathbf{p}},\qquad \dfrac{\textrm{d}\mathbf{p}}{\textrm{d}t}=-\dfrac{\partial H}{\partial \mathbf{y}}
\end{equation}
where $\mathbf{y},\mathbf{p}\in\mathbb{R}^d$, and $H(\mathbf{y},\mathbf{p})$ is the Hamiltonian function, corresponding to the total energy of the system. This defines a conservative system, where $H(\mathbf{y}(t),\mathbf{p}(t)) = H(\mathbf{y}(t_0),\mathbf{p}(t_0))$. The solution of this system can be described by the flow
\begin{equation}\label{flow0}
\phi_t(\mathbf{y}(t_0),\mathbf{p}(t_0))=(\mathbf{y}(t_0+t),\mathbf{p}(t_0+t))
\end{equation}
which defines a symplectic transformation, i.e.\ it preserves the symplectic structure $\omega=\textrm{d}y\wedge\textrm{d}p$:
\begin{equation}\label{pullback_def}
\phi_t^{\boldsymbol{*}}\omega = \omega
\end{equation}
where $\phi_t^{\boldsymbol{*}}$ denotes the pull-back of $\omega$ by $\phi_t$ (see, e.g., \cite{arnold78,mratiu}). In geometric terms, this corresponds to the preservation of all the Poincar{\'e} integral invariants \cite{serna}, leading to the condition $\det(\phi_t')=1$, where $\phi_t'$ is the Jacobian of $\phi_t$. Consequently, the Hamiltonian vector field $\mathbf{v_H}=(\mathbf{\dot y},\mathbf{\dot p})$ is divergence free (with $\mathbf{\nabla\cdot v_H}=0$).

A numerical approximation to the solution of (\ref{HODE}) at time $t^n=n\Delta t$ is given by $(\mathbf{y}^n,\mathbf{p}^n)$, with $\Delta t$ being the sampling interval. Hence a (one-step) numerical integrator of (\ref{HODE}) generates a discrete mapping $\Phi_{\Delta t}:\mathbb{R}^{2d}\mapsto\mathbb{R}^{2d}$ with
\begin{equation}\label{map0}
(\mathbf{y}^{n+1},\mathbf{p}^{n+1}) = \Phi_{\Delta t}(\mathbf{y}^n,\mathbf{p}^n).
\end{equation}
Energy conserving schemes should obey $H^{n+1} = H^n$
where $H^n = H(\mathbf{y}^{n},\mathbf{p}^{n})$. For symplectic mappings $\textrm{d}\mathbf{y}^{n+1}\wedge\textrm{d}\mathbf{p}^{n+1} =\textrm{d}\mathbf{y}^{n}\wedge\textrm{d}\mathbf{p}^{n}$
should hold. A \emph{mechanical integrator}, according to the definition given by Wendlandt and Marsden \cite{wendlandt}, has to preserve either the energy or the symplectic form, while it has been shown that, in general, both invariants can not be preserved \cite{zhong}. (Note, however, that for a given problem it is possible to construct symplectic methods that also conserve energy, as explained in \cite{brugnano_ECsymp}, by fine-tuning a discretisation parameter at each time-step.)

For the construction of numerical integrators it is helpful to define the following difference operators, acting on a discrete time approximation $\chi^n$ of a continuous variable $\chi$
\begin{subequations}\label{deltas0}
	\begin{align}\label{deltas}
		\delta_{t+}\chi^n = \dfrac{\chi^{n+1}-\chi^n}{\Delta t},&\qquad \delta_{t-}\chi^n = \dfrac{\chi^{n}-\chi^{n-1}}{\Delta t}.
	\end{align}
	Similarly averaging operators are defined as
	\begin{align}\label{mus}
		\mu_{t+}\chi^n = \dfrac{\chi^{n+1}+\chi^n}{2},&\qquad \mu_{t-}\chi^n = \dfrac{\chi^{n}+\chi^{n-1}}{2}.
	\end{align}
\end{subequations}

\section{The damped harmonic oscillator}\label{sec;lumped}

\subsection{Continuous domain}\label{sec;cont}

The equation of motion for the displacement $y$ of a damped harmonic oscillator is given by

\begin{equation}\label{dhomot0}
\dfrac{\textrm{d}^2y}{\textrm{d}t^2} + \gamma\dfrac{\textrm{d}y}{\textrm{d}t} + \omega_0^2 y = 0
\end{equation}
where $\gamma$ is the damping and $\omega_0$ the resonance frequency of the oscillator. Multiplying by the mass $m$ of the oscillator yields
\begin{equation}\label{dhomot}
m\dfrac{\textrm{d}^2y}{\textrm{d}t^2} + m\gamma\dfrac{\textrm{d}y}{\textrm{d}t} + ky = 0
\end{equation}
where $k=m\omega_0^2$ is the stiffness. Two initial conditions have to be specified for this second order equation, namely $y(0)$ and $\dot y(0)$, where the dot signifies differentiation with respect to time. This dissipative system can be written in Hamiltonian form (see, e.g., \cite{dressler_lyapunov}) as
\begin{equation}\label{dhoHeq}
\dfrac{\textrm{d}y}{\textrm{d}t} = \dfrac{\partial H}{\partial p},\qquad\dfrac{\textrm{d}p}{\textrm{d}t} = -\dfrac{\partial H}{\partial y} -\gamma p
\end{equation}
where $H(y,p)=T(p)+V(y)$ is the sum of the kinetic energy $T$ and the potential energy $V$, with $T(p)=p^2/(2m)$ and $V(y)=ky^2/2$, $p=\partial L/\partial\dot y$ being the conjugate momentum, where $L=T-V$ is the Lagrangian of the system. The required initial condition here is $(y(0),p(0))$. In the undamped case (for $\gamma=0$) this is clearly a Hamiltonian system, which conserves the total energy $H$ and the symplectic structure $\omega = \textrm{d}y\wedge \textrm{d}p$ \cite{arnold78}. In the presence of damping, energy is dissipated according to
\begin{equation}\label{dHdt}
\dfrac{\textrm{d}H}{\textrm{d}t} = \dfrac{\partial H}{\partial y}\dfrac{\textrm{d}y}{\textrm{d}t} +\dfrac{\partial H}{\partial p}\dfrac{\textrm{d}p}{\textrm{d}t} = -\dfrac{\gamma p^2}{m}\leq 0
\end{equation}
which induces the following conservation law
\begin{equation}\label{Hcon}
H+\int\dfrac{\gamma p^2}{m}\;dt = \mbox{const}.
\end{equation}
Similar to the system energy, the symplectic area is also dissipated. In fact \cite{mc_dyn}, it is preserved up to a multiple $c(t)$, that is
\begin{equation}\label{pullback}
\phi_t^{\boldsymbol{*}}\omega = e^{-\gamma t}\omega
\end{equation}
where $\phi_t$ is the time-$t$ flow of the conformal vector field $\mathbf{v_{H,\gamma}} = (\dot y,\dot p)$ and $c(t) = e^{-\gamma t}$. The divergence of this field is
\begin{equation}\label{divg}
\textrm{div}(\mathbf{v_{H,\gamma}})=\mathbf{\nabla\cdot v_{H,\gamma}} = -\gamma.
\end{equation}
Note that $\phi$ lies in the \emph{conformal symplectic group} ${\textrm{Diff}\,}_\omega^c$ of diffeomorphisms that preserve a symplectic 2-form $\omega$ up to a factor \cite{mc_conf}. The above quantities (energy and symplectic area) will be used for the analysis of different numerical schemes.

For $\gamma/2<\omega_0$ the system admits an oscillatory solution. Taking the Laplace transform of (\ref{dhomot0}) yields the characteristic equation
\begin{equation}\label{cheq}
\sigma^2 + \gamma \sigma + \omega_0^2 = 0
\end{equation}
which is solved by $\sigma=\gamma/2 \pm \textrm{j}\,\omega_\gamma$, with $\omega_\gamma=\sqrt{\omega_0^2-(\gamma/2)^2}$ being the frequency of the damped oscillator. The exact solution can then be written as
\begin{equation}\label{anal}
y_\textrm{an}(t) = Ae^{-\gamma t/2}\cos(\omega_\gamma t + \theta)
\end{equation}
where the subscript `\emph{an}' stands for \emph{analytical}, with
\begin{subequations}\label{anal2}
	\begin{equation}
	\dot y_\textrm{an}(t) = -A\omega_\gamma e^{-\gamma t/2}\sin(\omega_\gamma t + \theta) - \frac{\gamma}{2}Ae^{-\gamma t/2}\cos(\omega_\gamma t + \theta)
	\end{equation}
	and
	\begin{equation}
	H_\textrm{an}(t) = \frac{m}{2}\dot y_\textrm{an}^2+\frac{k}{2}y_\textrm{an}^2
	\end{equation}
\end{subequations}
where the amplitude $A$ and the phase $\theta$ of the oscillation can be obtained from the initial conditions. For the overdamped case (when $\gamma/2>\omega_0$; not treated in this paper) a solution of the form $y_\textrm{an}(t) = c_1e^{\sigma_{+}t} + c_2e^{\sigma_{-}t}$, with $\sigma_\pm = -\gamma/2 \pm\sqrt{(\gamma/2)^2-\omega_0^2}$ is obtained, with the system exponentially decaying to its equilibrium position.

\subsection{Discretisation domain}\label{sec;disc}

A series of numerical schemes will be evaluated, where, in terms of structure preservation, it will be examined whether they respect the numerical counterpart of the evolution of energy and symplectic structure.
The discrete energy of the system is defined as
\begin{equation}\label{Hdisc}
H^n = \frac{(p^n)^2}{2m} + \frac{k}{2}(y^n)^2
\end{equation}
In order to examine the transition from $H^n$ to $H^{n+1}$ (\ref{dHdt}) can be discretised using the forward difference operator $\delta_{t+}H^n$ and the time-averaging operator $\mu_{t+}p^n$ as
\begin{equation}\label{dHdtdisc}
\dfrac{H^{n+1}-H^n}{\Delta t} = -\dfrac{\gamma}{m}\left(\dfrac{p^{n+1}+p^n}{2}\right)^2.
\end{equation}
Note that the averaging of the discrete momentum needs to take place, so that both sides of the equation are centred around time $(n+1/2)\Delta t$. This leads to the discrete conservation law
\begin{equation}\label{Hcond}
K^{n} = H^{n+1} + \sum_{\kappa=0}^n\dfrac{\gamma}{m}\left(\mu_{t+}p^\kappa\right)^2\Delta t = \mbox{const}.
\end{equation}
which is the discrete equivalent of (\ref{Hcon}). The error in the conservation of this quantity is measured using the preservation metric
\begin{equation}\label{metricK}
\displaystyle{\mathcal{K} = \dfrac{1}{N+1}\sum_{n=0}^N\dfrac{K^{n+1}-K^n}{K^0}}
\end{equation}
where $N$ is the number of time-steps taken. It corresponds to the average deviation per sample of the conserved quantity $K$ normalised with respect to $K^0$. (Note that different numerical approximations may be used to evaluate (\ref{dHdt}), but these may result in larger computational stencils or non-centred approximations.) 

The symplectic structure $\omega$ evolves subject to the transformation $(y^n,p^n)\mapsto(y^{n+1},p^{n+1})$ that corresponds to the chosen numerical integration algorithm, that is \cite{serna}
\begin{equation}\label{dynp1}
\textrm{d}y^{n+1}\wedge \textrm{d}p^{n+1} = \left(\dfrac{\partial y^{n+1}}{\partial y^n}\dfrac{\partial p^{n+1}}{\partial p^n} -\dfrac{\partial y^{n+1}}{\partial p^n}\dfrac{\partial p^{n+1}}{\partial y^n}\right)\textrm{d}y^n\wedge \textrm{d}p^n:=\mathcal{D}\;\textrm{d}y^n\wedge \textrm{d}p^n.
\end{equation}
The contraction relation for conformal symplectic mappings can be written as (see \cite{mc_conf})
\begin{equation}\label{De}
\textrm{d}y^{n+1}\wedge \textrm{d}p^{n+1} = e^{-\gamma\Delta t}\;\textrm{d}y^n\wedge \textrm{d}p^n
\end{equation}
hence the equality $\mathcal{D}=e^{-\gamma\Delta t}$ must hold in order for the symplectic form to contract exactly at the correct rate. A \emph{mechanical integrator}, as defined in Section~\ref{sec;intro} for conservative systems, should here either obey the conservation law (\ref{Hcond}) or contract the symplectic form according to (\ref{De}).
It should be noted that, from the above definitions, the conformal symplectic character of the evolution is more general; the derivation of (\ref{Hcond}) depends on the choice of the finite difference operators, whereas the derivation of (\ref{De}) is universal.

The accuracy of the various numerical approximations will be also judged by comparison with the energy of the system, as calculated using the analytic solution (\ref{anal}).
The presence of such an exact solution allows a direct evaluation of the accuracy of various numerical schemes, something that can not always be achieved when integrating nonlinear systems (as in Section~\ref{sec;nl}). If the time series $\mathbf{H}_{\Delta t} = \big(H^1,H^2,\ldots,H^N\big)$ is derived from an approximate solution, with $H^n = H(y^n,p^n)$, then the deviation of the approximate energy from the exact value is calculated using the following metric
\begin{equation}\label{metrics}
H_\textrm{dev} = 100\dfrac{\|\mathbf{H}_{\Delta t}-\mathbf{H}_\textrm{an}\|_2}{\overline{\mathbf{H}_\textrm{an}}}
\end{equation}
where $\overline{\mathbf{H}_\textrm{an}}$ is the mean value of $\mathbf{H}_\textrm{an} = \big(H_\textrm{an}(\Delta t),H_\textrm{an}(2\Delta t),\ldots,H_\textrm{an}(N\Delta t)\big)$.

\subsection{A divergence-free field}\label{sec;divw}

In the case of conservative systems, the fact that the symplectic form remains constant is visualised by means of a divergence-free vector field. In order to visualise how the contraction of the symplectic form is respected for dissipative systems, a divergence-free field $\mathbf{w}$ is defined. This lies on the modified phase space
\begin{equation}\label{MTR}
\mathcal{W} = e^{\gamma t/2}T^{\boldsymbol{*}}Q=(\psi,~\xi) = \left\{\Big(e^{\gamma t/2}y(t),~e^{\gamma t/2}p(t)\Big)\, : \,y\in Q,p\in T_y^{\boldsymbol{*}}Q\right\}
\end{equation}
where $Q$ is the configuration space of the system (here $Q=\mathbb{R}$), $T_y^{\boldsymbol{*}}Q$ the cotangent space of $Q$ at $y$ and $T^{\boldsymbol{*}}Q$ the cotangent bundle of $Q$.

\begin{theorem}
	For every vector field $\mathbf{v} = (\dot y,\dot p)$ whose time-$t$ flow lies in the conformal symplectic group ${\mathrm{Diff}\,}_\omega^c$, there exists a divergence-free field $\mathbf{w}$ defined on the modified phase space $\mathcal{W}$.
\end{theorem}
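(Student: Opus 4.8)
The plan is to exhibit $\mathbf{w}$ explicitly as the velocity field of the trajectories transported into the modified phase space $\mathcal{W}$, and then to verify that it is divergence-free by a direct chain-rule computation. Concretely, I would take $(\psi,\xi)=\big(e^{\gamma t/2}y,\,e^{\gamma t/2}p\big)$ exactly as in (\ref{MTR}) and define $\mathbf{w}=(\dot\psi,\dot\xi)$ to be the image of the conformal field $\mathbf{v}=(\dot y,\dot p)$ under this time-dependent rescaling. Proving existence then amounts to checking that this candidate $\mathbf{w}$ indeed has vanishing divergence on $\mathcal{W}$.

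The first step is to differentiate the defining relations. By the product rule, $\dot\psi=\frac{\gamma}{2}e^{\gamma t/2}y+e^{\gamma t/2}\dot y=\frac{\gamma}{2}\psi+e^{\gamma t/2}\dot y$, and likewise $\dot\xi=\frac{\gamma}{2}\xi+e^{\gamma t/2}\dot p$. Thus the rescaling injects an extra $\tfrac{\gamma}{2}$ into each component of the velocity; the heuristic is that this uniform inflation rate, summed over the two coordinates, should exactly balance the contraction rate $-\gamma$ of the original field.

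The second step is to compute $\mathbf{\nabla\cdot w}=\partial\dot\psi/\partial\psi+\partial\dot\xi/\partial\xi$ on $\mathcal{W}$. Here the only delicate point is that $\dot y$ and $\dot p$ must be re-expressed as functions of the new variables through $y=e^{-\gamma t/2}\psi$ and $p=e^{-\gamma t/2}\xi$. Applying the chain rule, $e^{\gamma t/2}\,\partial\dot y/\partial\psi=e^{\gamma t/2}\big(\partial\dot y/\partial y\big)e^{-\gamma t/2}=\partial\dot y/\partial y$, so the two exponential factors cancel and one is left with $\partial\dot\psi/\partial\psi=\tfrac{\gamma}{2}+\partial\dot y/\partial y$, and symmetrically $\partial\dot\xi/\partial\xi=\tfrac{\gamma}{2}+\partial\dot p/\partial p$.

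Adding the two contributions gives $\mathbf{\nabla\cdot w}=\gamma+\big(\partial\dot y/\partial y+\partial\dot p/\partial p\big)=\gamma+\mathbf{\nabla\cdot v}$. Since the time-$t$ flow of $\mathbf{v}$ lies in ${\mathrm{Diff}\,}_\omega^c$ with contraction factor $e^{-\gamma t}$, the relation between the conformal factor and the rate of change of the flow Jacobian forces $\mathbf{\nabla\cdot v}=-\gamma$, exactly as recorded in (\ref{divg}); hence $\mathbf{\nabla\cdot w}=0$, as required. I expect the main obstacle to be essentially bookkeeping: keeping the time-dependent change of variables straight so that the $e^{\pm\gamma t/2}$ factors cancel correctly, and taking care that the argument relies only on the conformal contraction rate $-\gamma$ rather than on any special structure of the damped oscillator, so that the conclusion holds for \emph{every} conformal field $\mathbf{v}$ and not merely for $\mathbf{v_{H,\gamma}}$.
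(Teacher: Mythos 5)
Your proposal is correct, and it uses the same candidate field as the paper: $\mathbf{w}=(\dot\psi,\dot\xi)$ obtained by pushing $\mathbf{v}$ into $\mathcal{W}$ via $(\psi,\xi)=(e^{\gamma t/2}y,\,e^{\gamma t/2}p)$. Where you genuinely diverge from the paper is in how the conformality hypothesis enters the final cancellation. The paper substitutes the structural form of a conformal field\textemdash{}equation (\ref{dhoHeq}), i.e.\ Hamiltonian vector field plus the linear damping term $(0,-\gamma p)$\textemdash{}into the divergence and lets the mixed partials $\partial^2 H/\partial y\partial p$ cancel, with the $-\gamma$ from the damping term absorbing the two $\gamma/2$ inflation terms; this implicitly invokes the (nontrivial, uncited-in-proof) characterization that any field whose flow lies in ${\mathrm{Diff}\,}_\omega^c$ must have the form (\ref{dhoHeq}). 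You instead bypass the Hamiltonian structure entirely: from $\phi_t^{\boldsymbol{*}}\omega=e^{-\gamma t}\omega$, which in this one-degree-of-freedom setting reads $\det(\phi_t')=e^{-\gamma t}$, Liouville's formula for the flow Jacobian gives $\mathbf{\nabla\cdot v}=-\gamma$ (the paper's (\ref{divg})), and your chain-rule computation $\mathbf{\nabla\cdot w}=\gamma+\mathbf{\nabla\cdot v}$ then closes the argument. Your route is more self-contained and makes transparent the mechanism (uniform inflation rate $\gamma$ exactly offsetting the contraction rate $-\gamma$), and it visibly applies to \emph{every} conformal field rather than to those already written in the form (\ref{dhoHeq}); the paper's route, by contrast, keeps the role of $H$ explicit, which ties the proof to the surrounding discussion of conformal Hamiltonian dynamics. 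One further small difference worth noting: you correctly take the divergence with respect to the coordinates $(\psi,\xi)$ of $\mathcal{W}$, whereas the paper differentiates with respect to $(y,p)$; the two computations differ only by the positive factor $e^{-\gamma t/2}$, so the conclusions coincide, but yours is the more natural reading of ``divergence-free field defined on $\mathcal{W}$.''
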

\begin{proof}
	Let $\mathbf{w}= (\dot\psi,\dot\xi)$ with $\psi = e^{\gamma t/2}y(t)$ and $\xi = e^{\gamma t/2}p(t)$. The divergence of this field is
	\begin{equation*}
		\textrm{div}(\mathbf{w}) = \mathbf{\nabla\cdot w} = \left(\frac{\partial}{\partial y},\frac{\partial}{\partial p}\right)\cdot \left(\frac{\gamma}{2}e^{\gamma t/2}y +e^{\gamma t/2}\dot y,~\frac{\gamma}{2}e^{\gamma t/2}p + e^{\gamma t/2}\dot p\right)
	\end{equation*}
	Now, since the flow of $\mathbf{v}$ lies in ${\textrm{Diff}\,}_\omega^c$, equation (\ref{dhoHeq}) holds, hence
	\begin{equation*}
		\textrm{div}(\mathbf{w}) = \frac{\partial}{\partial y}\left(\frac{\gamma}{2}e^{\gamma t/2}y +e^{\gamma t/2}\frac{\partial H}{\partial p}\right) + \frac{\partial}{\partial p}\left(\frac{\gamma}{2}e^{\gamma t/2}p + e^{\gamma t/2}\Big(-\frac{\partial H}{\partial y} -\gamma p\Big)\right) = 0
	\end{equation*}
	and $\mathbf{w}$ is divergence-free. $\square$
\end{proof}

As such, plotting the solution trajectory in the modified phase space $\mathcal{W}$ resembles the phase-space trajectory of a conservative system. Figure \ref{fig;omega} shows such trajectories for the numerical methods treated in this paper. This reflects how accurately a method contracts the symplectic structure, in comparison to the analytic solution. The contraction relation (\ref{De}), which can be written as
\begin{equation}\label{Demod}
e^{\gamma\Delta t/2}\textrm{d}y^{n+1}\wedge e^{\gamma\Delta t/2}\textrm{d}p^{n+1} = \textrm{d}y^n\wedge \textrm{d}p^n
\end{equation}
is visually stretched to mirror that of a conservative, divergence-free field, with
\begin{equation}\label{Depismu}
\textrm{d}\psi^{n+1}\wedge\textrm{d}\xi^{n+1} = \textrm{d}y^n\wedge \textrm{d}p^n.
\end{equation}

\subsection{Hamiltonian integrators}\label{sec;hamint}

An energy-conserving scheme (EC)\textemdash{}in the sense of equation~(\ref{Hcond})\textemdash{}whose properties have been recently demonstrated for a class of nonlinear Hamiltonian systems \cite{smac}, can be obtained by applying mid-point derivative approximations to (\ref{dhoHeq})
\begin{subequations}\label{Heqdisc}
	\begin{align}
		\dfrac{y^{n+1}-y^n}{\Delta t} &= \dfrac{T(p^{n+1})-T(p^n)}{p^{n+1}-p^n}\label{Heqdisc1}\\
		\dfrac{p^{n+1}-p^n}{\Delta t} &= -\dfrac{V(y^{n+1})-V(y^n)}{y^{n+1}-y^n} - \gamma\dfrac{p^{n+1}+p^n}{2}\label{Heqdisc2}
	\end{align}
\end{subequations}
leading to the following numerical scheme
\begin{subequations}\label{ECmap}
	\begin{align}
		p^{n+1} &= \dfrac{1-k\Delta t^2/4m -\gamma\Delta t/2}{1+k\Delta t^2/4m +\gamma\Delta t/2}p^n - \dfrac{k\Delta t}{1+k\Delta t^2/4m+\gamma\Delta t/2}y^n\\
		y^{n+1} &= y^n + \dfrac{\Delta t}{2m}(p^{n+1}+p^n).
	\end{align}
\end{subequations}
As explained in \cite{jsv}, in the case of linear systems this is equivalent to both the (symplectic) midpoint rule (which is a second order Runge-Kutta method) and the trapezoidal rule (which belongs to the family of Newmark methods). The superiority of this algorithm, in terms of energy conservation, becomes apparent when nonlinear forces act on the system (see Section~\ref{sec;nl} and Ref. \cite{jsv}).

\begin{proposition}
	The (EC) scheme defined by (\ref{Heqdisc}) exactly replicates the numerical energy balance (\ref{dHdtdisc}).
\end{proposition}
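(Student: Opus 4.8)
The plan is to compute the exact energy increment $H^{n+1}-H^n$ directly from the definition (\ref{Hdisc}) and show that, after substituting the scheme relations (\ref{Heqdisc}), it collapses to precisely $-\Delta t\,(\gamma/m)(\mu_{t+}p^n)^2$, which is exactly the right-hand side of (\ref{dHdtdisc}) times $\Delta t$. First I would exploit the additive structure $H=T+V$ and split the increment as $H^{n+1}-H^n = \bigl[T(p^{n+1})-T(p^n)\bigr] + \bigl[V(y^{n+1})-V(y^n)\bigr]$, so that each bracket can be fed one of the scheme equations.

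Next I would rewrite each bracket using the divided-difference form of (\ref{Heqdisc}). Equation (\ref{Heqdisc1}) gives $T(p^{n+1})-T(p^n) = (p^{n+1}-p^n)(y^{n+1}-y^n)/\Delta t$, while rearranging (\ref{Heqdisc2}) gives $V(y^{n+1})-V(y^n) = -(y^{n+1}-y^n)\bigl[(p^{n+1}-p^n)/\Delta t + \gamma\,\mu_{t+}p^n\bigr]$. Adding these, the common term $(y^{n+1}-y^n)(p^{n+1}-p^n)/\Delta t$ cancels identically, leaving $H^{n+1}-H^n = -\gamma\,(y^{n+1}-y^n)\,\mu_{t+}p^n$. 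This exact cancellation is the structural heart of the discrete-gradient construction: the divided differences in (\ref{Heqdisc}) are chosen precisely so that the telescoping of the chain rule is exact, so that only the dissipative contribution survives.

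The final step is to eliminate $y^{n+1}-y^n$ in favour of the averaged momentum. Here I would use the fact that $T(p)=p^2/(2m)$ is quadratic, so its divided difference equals the midpoint value, namely $\bigl[T(p^{n+1})-T(p^n)\bigr]/(p^{n+1}-p^n) = (p^{n+1}+p^n)/(2m) = \mu_{t+}p^n/m$. Feeding this back through (\ref{Heqdisc1}) yields $y^{n+1}-y^n = \Delta t\,\mu_{t+}p^n/m$, and substituting gives $H^{n+1}-H^n = -\Delta t\,(\gamma/m)(\mu_{t+}p^n)^2$, which upon dividing by $\Delta t$ reproduces (\ref{dHdtdisc}) exactly.

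The argument is essentially a single telescoping identity and presents no serious obstacle; the only point requiring care is to confirm that the damping term enters consistently on both sides so that no residual survives the cancellation. The reason the balance closes \emph{exactly} rather than approximately is that the midpoint average $\mu_{t+}p^n$ appearing in the damping term of (\ref{Heqdisc2}) coincides with the midpoint value produced by the quadratic divided difference in (\ref{Heqdisc1}); it is precisely this matching that makes the right-hand side close on $(\mu_{t+}p^n)^2$. Were $T$ non-quadratic, the discrete gradient would not reduce to the midpoint, and the specific balance (\ref{dHdtdisc}) — which is tied to the choice of the averaging operator $\mu_{t+}$ — would have to be restated.
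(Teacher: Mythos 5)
Your proposal is correct and follows essentially the same route as the paper: multiplying (\ref{Heqdisc1}) by $p^{n+1}-p^n$ and (\ref{Heqdisc2}) by $y^{n+1}-y^n$, letting the cross terms cancel, and then using the quadratic form of $T$ to replace $(y^{n+1}-y^n)/\Delta t$ by $\mu_{t+}p^n/m$. Your closing remark about the quadratic kinetic energy being the reason the balance closes on $(\mu_{t+}p^n)^2$ is a worthwhile observation, but it does not change the argument, which matches the paper's proof step for step.
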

\begin{proof}
	Multiplying (\ref{Heqdisc1}) by $p^{n+1}-p^n$ and (\ref{Heqdisc2}) by $y^{n+1}-y^n$ and substituting by parts yields
	\begin{equation*}
		T(p^{n+1}) + V(y^{n+1}) = T(p^n) + V(y^n) - \gamma\frac{p^{n+1} + p^n}{2}(y^{n+1} - y^n).
	\end{equation*}
	with $(y^{n+1} - y^n)/\Delta t = (p^{n+1} + p^n)/(2m)$, hence
	\begin{equation*}
		H^{n+1} = H^n - \frac{\gamma}{m}(\mu_{t+}p^n)^2\Delta t
	\end{equation*}
	which replicates (\ref{dHdtdisc}) exactly, rendering (EC) a mechanical integrator. $\square$
\end{proof}

Note that this derivation holds for any potential function $V$ and hence also applies to nonlinear systems, with $K$ conserved to machine precision in implementations on digital processors. For the symplectic structure the following relation can be shown
\begin{equation}\label{wEC}
\textrm{d}y^{n+1}\wedge \textrm{d}p^{n+1} = \dfrac{\omega_0^2\Delta t/2 + 2 -\gamma\Delta t}{\omega_0^2\Delta t/2 + 2 +\gamma\Delta t}\;\textrm{d}y^n\wedge \textrm{d}p^n
\end{equation}
hence the mapping does not exactly replicate the conformal symplectic dynamics of the continuous system.

A well known symplectic integrator for Hamiltonian systems is given by the velocity Verlet algorithm (VV) \cite{hairerVV}. Defining $f^n=f(y^n)$ as the discretisation of the force $f=-\partial V/\partial y$ acting on the system (in this case $f(y^n)=-ky^n$), the following algorithm is obtained for the dissipative system (\ref{dhoHeq})
\begin{equation}\label{VVl}
\begin{split}
&p^{n+1/2} = \dfrac{p^n+\dfrac{\Delta t}{2}f^n}{1+\gamma\dfrac{\Delta t}{2}},\\
&y^{n+1} = y^n + \dfrac{\Delta t}{m}p^{n+1/2},\qquad p^{n+1} = \left(1-\dfrac{\Delta t}{2}\gamma\right)p^{n+1/2} + \dfrac{\Delta t}{2}f^{n+1}
\end{split}
\end{equation}
with the sympletic structure $\omega$ evolving according to
\begin{equation}\label{wVV}
\textrm{d}y^{n+1}\wedge \textrm{d}p^{n+1} = \dfrac{2-\gamma\Delta t}{2+\gamma\Delta t}\;\textrm{d}y^n\wedge \textrm{d}p^n.
\end{equation}
Thus the contraction factor $\mathcal{D}$ of the symplectic structure is the $(1,1)$ Pad\'e approximation to $e^{-\gamma\Delta t}$ \cite{pade}.

A special treatment of damped oscillators can be achieved using the Caldirola-Kanai Lagrangian \cite{caldirola,kanai}, given by
\begin{equation}\label{LCK}
L_{\textrm{CK}} = e^{\gamma t}(\frac{1}{2}m\dot y^2 -\frac{1}{2}ky^2) = e^{\gamma t}(T-V).
\end{equation}
Defining $\varpi=\partial L_{\textrm{CK}}/\partial\dot y$ and taking the Legendre transformation of $L_{\textrm{CK}}$ yields the Hamiltonian
\begin{equation}\label{HCK}
H_{\textrm{CK}} = e^{-\gamma t}\dfrac{\varpi^2}{2m} + e^{\gamma t}\dfrac{k}{2}y^2.
\end{equation}
An interesting feature of this approach is that the conjugate momentum $\varpi$ is different from the kinematic momentum $p$, with $\varpi = e^{\gamma t}m\dot y$. Hamilton's equations take their classical form
\begin{equation}\label{HeqCK}
\dfrac{\textrm{d}y}{\textrm{d}t}=\dfrac{\partial H_{CK}}{\partial \varpi},\qquad \dfrac{\textrm{d}\varpi}{\textrm{d}t}=-\dfrac{\partial H_{CK}}{\partial y}
\end{equation}
and the total energy of the system is given by $H = e^{-\gamma t}H_{CK}$. Note that
\begin{equation}\label{dHdtCK}
\frac{\textrm{d}H}{\textrm{d}t} = -\gamma e^{-\gamma t}H_{CK} + e^{-\gamma t}\frac{\textrm{d}H_{CK}}{\textrm{d}t} = -2\gamma\left(e^{-\gamma t}\right)^2\frac{\varpi^2}{2m} = -\gamma\frac{p^2}{m}
\end{equation}
in accordance to (\ref{dHdt}). Discretisation of (\ref{HeqCK}) at mid-point yields the following numerical scheme (CK)
\begin{subequations}\label{HCdisc}
	\begin{align}
		\dfrac{y^{n+1}-y^n}{\Delta t} &= e^{-\gamma(n+1/2)\Delta t}\,\dfrac{\varpi^{n+1}+\varpi^n}{2m} \label{HCdisc1}\\
		\dfrac{\varpi^{n+1}-\varpi^n}{\Delta t} &= e^{\gamma(n+1/2)\Delta t}\dfrac{k}{2}(y^{n+1}+y^n) \label{HCdisc2}.
	\end{align}
	Defining
	\begin{equation}\label{qndef}
	q^n=\varpi^ne^{-\gamma n\Delta t}\Delta t/(2m)
	\end{equation}
	and
	\begin{equation}\label{xndef}
	x^n = \left(2q^ne^{-\gamma\Delta t/2} - \frac{\Delta t^2ky^n}{2m}\right)/\left(1 + \frac{\Delta t^2k}{4m}\right)
	\end{equation}
\end{subequations}
leads to the following explicit update
\begin{equation}\label{ypCK}
q^{n+1} = x^ne^{-\gamma\Delta t/2} - q^ne^{-\gamma\Delta t},\qquad y^{n+1} = y^n + x^n
\end{equation}    
whence $\varpi^{n+1}$ can be obtained as $\varpi^{n+1} = (2m/\Delta t)q^{n+1}e^{\gamma(n+1)\Delta t}$.

\begin{proposition}
	The (CK) scheme, as defined by (\ref{HCdisc}) constitutes a conformal symplectic mapping.
\end{proposition}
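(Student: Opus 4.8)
The plan is to recognise the (CK) scheme as the implicit midpoint rule applied to the canonical system (\ref{HeqCK}) with its explicitly time-dependent coefficients frozen at the midpoint $t^{n+1/2}=(n+1/2)\Delta t$, and then to transfer the resulting canonical symplecticity to the physical variables $(y,p)$ through the phase-space-independent rescaling $p=e^{-\gamma t}\varpi$. The conformal factor $e^{-\gamma\Delta t}$ will arise entirely from this last change of variables.

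First I would observe that, because the weights $e^{\mp\gamma(n+1/2)\Delta t}$ appearing in (\ref{HCdisc1})--(\ref{HCdisc2}) are constants within a single step, the update (\ref{HCdisc}) is precisely the implicit midpoint discretisation of the autonomous quadratic Hamiltonian $\tilde H^{n+1/2}(y,\varpi)=e^{-\gamma(n+1/2)\Delta t}\varpi^2/(2m)+e^{\gamma(n+1/2)\Delta t}ky^2/2$. The implicit midpoint rule is symplectic for any autonomous Hamiltonian, so the step map $(y^n,\varpi^n)\mapsto(y^{n+1},\varpi^{n+1})$ preserves the canonical two-form, $\textrm{d}y^{n+1}\wedge\textrm{d}\varpi^{n+1}=\textrm{d}y^n\wedge\textrm{d}\varpi^n$.

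Next I would push this forward to the kinematic momentum. Since $p^n=e^{-\gamma n\Delta t}\varpi^n$ and the exponential prefactor depends only on the time index $n$, not on the phase coordinates, it passes through the exterior derivative as a constant at each fixed step, giving $\textrm{d}p^n=e^{-\gamma n\Delta t}\textrm{d}\varpi^n$. Substituting and using the canonical invariance yields $\textrm{d}y^{n+1}\wedge\textrm{d}p^{n+1}=e^{-\gamma(n+1)\Delta t}\,\textrm{d}y^{n+1}\wedge\textrm{d}\varpi^{n+1}=e^{-\gamma(n+1)\Delta t}\,\textrm{d}y^n\wedge\textrm{d}\varpi^n=e^{-\gamma\Delta t}\,\textrm{d}y^n\wedge\textrm{d}p^n$, which is exactly the contraction law (\ref{De}) with $\mathcal{D}=e^{-\gamma\Delta t}$, establishing the conformal symplectic character of the mapping.

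As a self-contained check I would also compute $\mathcal{D}$ directly from the explicit update (\ref{ypCK}). Noting from (\ref{qndef}) that $q^n=\tfrac{\Delta t}{2m}p^n$ is a constant rescaling of the physical momentum, it suffices to verify the contraction in the $(y,q)$ coordinates: writing $a=e^{-\gamma\Delta t/2}$ and $D=1+\Delta t^2k/(4m)$ and differentiating (\ref{xndef}) and (\ref{ypCK}), the Jacobian determinant (\ref{dynp1}) collapses to $\mathcal{D}=(a^2/D^2)\big[(2-D)^2+\Delta t^2k/m\big]$, whose bracket equals $D^2$, leaving $\mathcal{D}=a^2=e^{-\gamma\Delta t}$. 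The only genuine obstacle in either route is bookkeeping: the exponential weights carry all the explicit time dependence and must be kept rigorously separate from the dynamical variables, so that at each step they factor out of the wedge products as constants; once that separation is respected the conformal factor $e^{-\gamma\Delta t}$ follows immediately.
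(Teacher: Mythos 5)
Your proof is correct, and its main route is genuinely different from the paper's. The paper works in the normalised momentum coordinates $(y^n,q^n)$, with $q^n=\varpi^n e^{-\gamma n\Delta t}\Delta t/(2m)$: it asserts (without showing the computation) that the scheme contracts $\textrm{d}y^n\wedge\textrm{d}q^n$ by exactly $e^{-\gamma\Delta t}$, and then transfers this to $(y^n,p^n)$ through the \emph{constant} rescaling $q^n=p^n\Delta t/(2m)$. You instead work in the canonical Caldirola-Kanai pair $(y^n,\varpi^n)$, where the step map is the implicit midpoint rule for the quadratic Hamiltonian frozen at $t^{n+1/2}$ and hence preserves $\textrm{d}y^n\wedge\textrm{d}\varpi^n$ exactly; the conformal factor then arises entirely from the \emph{time-dependent} rescaling $p^n=e^{-\gamma n\Delta t}\varpi^n$, whose prefactor shrinks by $e^{-\gamma\Delta t}$ across one step. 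The two decompositions are complementary: yours isolates the structural reason the scheme works (the CK transformation canonicalises the damped system, so \emph{any} symplectic one-step map in $(y,\varpi)$ yields a conformal symplectic map in $(y,p)$, and the argument would extend verbatim to other symplectic discretisations of (\ref{HeqCK})), whereas the paper stays in the variables actually used in the implementation (\ref{ypCK}) but leaves its key wedge-product identity unproven. Your closing Jacobian computation in $(y,q)$, giving $\mathcal{D}=(a^2/D^2)\big[(2-D)^2+\Delta t^2k/m\big]=a^2=e^{-\gamma\Delta t}$, is precisely the derivation the paper omits, so in that respect your write-up is more complete than the original. One small caveat: as printed, (\ref{HCdisc2}) is missing a minus sign relative to (\ref{HeqCK}) (the explicit update (\ref{ypCK}) is consistent with the minus sign, so this is a typo); your identification with the midpoint rule tacitly uses the corrected sign, which is clearly the intended reading, and the symplecticity of the resulting Cayley-type map in $(y,\varpi)$ holds in either case.
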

\begin{proof}
	For the mapping in (\ref{HCdisc}) it can be derived that $\textrm{d}y^{n+1}\wedge \textrm{d}q^{n+1} = e^{-\gamma\Delta t}\textrm{d}y^n\wedge \textrm{d}q^n$. From (\ref{qndef}) and using the fact that $\varpi^n = e^{\gamma n\Delta t}p^n$ one can write $q^n = p^n\Delta t/2m$. Then
	\begin{equation}
	\begin{split}
	\textrm{d}y^{n+1}\wedge \textrm{d}p^{n+1} &= \frac{2m}{\Delta t}\left(\textrm{d}y^{n+1}\wedge \textrm{d}q^{n+1}\right)\\[0.2cm]
	&= \frac{2m}{\Delta t}e^{-\gamma\Delta t}\textrm{d}y^{n}\wedge \textrm{d}q^{n} = e^{-\gamma\Delta t}\textrm{d}y^{n}\wedge \textrm{d}p^{n}
	\end{split}
	\end{equation}
	in agreement with the system dynamics. $\square$
\end{proof}

\subsection{The impulse invariant method}\label{sec;newton}

A discretisation method that has seen much use, especially in signal processing applications, is the \emph{impulse invariance method} (IIM). In this approach the impulse response of the system is derived and a sampled version of it is designed for the discretisation \cite{pollock}. Defining the amplification factor $z=e^{\sigma\Delta t}$, so that $y^{n+1} = zy^n$ and assuming that $y^{n+1} + a_1 y^n + a_2 y^{n-1} = 0$, the characteristic equation (\ref{cheq}) leads to
\begin{equation}\label{chariim}
z + a_1 + a_2z^{-1} = 0.
\end{equation}
Substituting the exact value of $\sigma$ (from the solution of (\ref{cheq})), so that $z = e^{(-\gamma/2 + \textrm{j}\,\omega_\gamma)\Delta t}$ and using Euler's rule yields $a_1$ and $a_2$ so that
\begin{equation}\label{iimdisc}
y^{n+1} = 2e^{-\gamma\Delta t/2}\cos(\omega_\gamma\Delta t)y^n -e^{-\gamma\Delta t}y^{n-1}
\end{equation}
where initial conditions for $y^0$ and $y^1$ are required. Stability here is ensured if the amplification factor $|z|<1$, which holds for $1/\Delta t>\omega_0$.

\begin{proposition}
	The impulse invariant method contracts the symplectic structure $\omega$ at exactly the correct rate.
\end{proposition}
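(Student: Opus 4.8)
The plan is to recognise that, although the impulse invariant method (\ref{iimdisc}) is written as a scalar two-step recursion in $y$ alone, the quantity $\mathcal{D}$ defined in (\ref{dynp1}) is just the Jacobian determinant of the induced one-step map on phase space. The goal is therefore to show that this determinant equals $e^{-\gamma\Delta t}$, which is exactly the contraction factor prescribed by (\ref{De}).

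First I would recast (\ref{iimdisc}) as a one-step linear map on a two-dimensional state. Reading off the coefficients $a_1 = -2e^{-\gamma\Delta t/2}\cos(\omega_\gamma\Delta t)$ and $a_2 = e^{-\gamma\Delta t}$ from the form $y^{n+1}+a_1 y^n + a_2 y^{n-1}=0$, the recursion is equivalent to the companion-matrix update
\[
\begin{pmatrix} y^{n+1}\\ y^n\end{pmatrix}
= M\begin{pmatrix} y^{n}\\ y^{n-1}\end{pmatrix},
\qquad
M=\begin{pmatrix} -a_1 & -a_2\\ 1 & 0\end{pmatrix}.
\]
A one-line expansion gives $\det M = a_2 = e^{-\gamma\Delta t}$.

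Second, I would connect $\det M$ to $\mathcal{D}$. Because the scheme is linear, any phase-space representation $(y^n,p^n)\mapsto(y^{n+1},p^{n+1})$ obtained by introducing a discrete momentum $p^n$ as a fixed, invertible, constant-coefficient linear combination of the stored state (e.g.\ $p^n$ proportional to a backward difference of $y$) is linearly conjugate to the map $M$. Since conjugation preserves determinants, the Jacobian determinant of the phase-space map\textemdash{}which is precisely $\mathcal{D}$ in (\ref{dynp1})\textemdash{}coincides with $\det M$ and is in particular independent of how the momentum is assigned. Equivalently, the eigenvalues of $M$ are the amplification factors $z_\pm = e^{(-\gamma/2\pm \textrm{j}\,\omega_\gamma)\Delta t}$ solving the characteristic equation (\ref{chariim}); their product is $z_+z_- = e^{(\sigma_++\sigma_-)\Delta t}=e^{-\gamma\Delta t}$, since the sum of the roots of (\ref{cheq}) is $-\gamma$ by Vieta's formulae. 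Combining the two steps yields $\mathcal{D}=\det M = e^{-\gamma\Delta t}$, matching (\ref{De}) exactly, so the impulse invariant method contracts $\omega$ at precisely the correct rate.

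I expect the only genuine obstacle to be conceptual rather than computational: because (\ref{iimdisc}) never mentions $p$, one must first argue that the contraction factor $\mathcal{D}$ is well defined and basis-independent, so that it may legitimately be read off from the companion matrix irrespective of the chosen discrete momentum. Once this invariance under the linear change of variables defining $p$ is established, the determinant itself is immediate.
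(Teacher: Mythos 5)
Your proof is correct, and it takes a genuinely different route from the paper's. The paper fixes a concrete phase-space lift of the recursion (\ref{iimdisc})\textemdash{}namely $\hat{y}^n=\mu_{t-}y^n$, $\hat{p}^n=m\delta_{t-}y^n$\textemdash{}writes the induced one-step map explicitly in terms of $a_1,a_2$, and checks by direct computation that the wedge product picks up exactly the factor $a_2=e^{-\gamma\Delta t}$; that is, it verifies conformal symplecticity on one explicit representation. You instead evaluate the determinant on the companion matrix in the raw variables $(y^n,y^{n-1})$, where $\det M=a_2$ is immediate, and transfer it to the phase-space map by conjugation invariance of the determinant under any constant-coefficient, invertible momentum assignment; your Vieta observation $z_+z_-=e^{(\sigma_++\sigma_-)\Delta t}=e^{-\gamma\Delta t}$ (using the roots of (\ref{cheq})) explains \emph{why} this works: the IIM places its amplification factors at the exact values $e^{\sigma_\pm\Delta t}$, so the correct contraction rate is forced. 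Your version buys generality and insight\textemdash{}the factor $\mathcal{D}$ in (\ref{dynp1}) is intrinsic to the recursion, independent of how the discrete momentum is chosen, and the same argument applies to any two-step scheme with exact pole placement. The paper's version buys concreteness: it exhibits an actual map $(\hat{y}^n,\hat{p}^n)\mapsto(\hat{y}^{n+1},\hat{p}^{n+1})$ on which the form $\textrm{d}\hat{y}\wedge\textrm{d}\hat{p}$ is defined, which is what licenses calling the scheme conformal symplectic at all. For a self-contained write-up you should combine the two: name one admissible momentum (the paper's $\hat{p}^n=m\delta_{t-}y^n$ will do) so that $\omega$ is defined before its contraction factor is computed, then invoke your conjugation argument; and keep your restriction to time-independent changes of variables explicit, since a time-dependent rescaling (as in the Caldirola-Kanai variable $\varpi^n=e^{\gamma n\Delta t}p^n$) would alter the contraction factor.
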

\begin{proof}
	Let $\hat{y}^n = \mu_{t-}y^n$ and $\hat{p}^n = m\delta_{t-}y^n$. Then equation (\ref{iimdisc}) can be written in the form of a mapping on $\mathbb{R}^{2d}$, namely $(\hat{y}^n,\hat{p}^n)\mapsto(\hat{y}^{n+1},\hat{p}^{n+1})$ where
	\begin{subequations}
		\begin{align*}
			\hat{y}^{n+1} &= \frac{1-a_1-a_2}{2}\;\hat{y}^n + \Delta t\frac{1-a_1+a_2}{4m}\;\hat{p}^n\\
			\hat{p}^{n+1} &= -m\frac{1+a_1+a_2}{\Delta t}\;\hat{y}^n -\frac{1+a_1-a_2}{2}\;\hat{p}^n
		\end{align*}
	\end{subequations}
	with $\textrm{d}\hat{y}^{n+1}\wedge \textrm{d}\hat{p}^{n+1} = e^{-\gamma\Delta t}\;\textrm{d}\hat{y}^n\wedge \textrm{d}\hat{p}^n$
	hence the mapping is conformal symplectic. $\square$
\end{proof}

\subsection{Splitting methods}\label{sec;split}

Amongst the Hamiltonian integrators considered in Section \ref{sec;hamint}, only the (CK) method contracts the symplectic structure according to $\phi_t^{\boldsymbol{*}}\omega = e^{-\gamma t}\omega$. It is nevertheless possible to modify the other methods in order to achieve this \cite{split}. To this cause the conformal vector field $\mathbf{v_{H,\gamma}}$ is written as the sum of two fields
\begin{equation}\label{splitv}
\mathbf{v_{H,\gamma}} = \mathbf{v_{H,0}} + \mathbf{v_{C,\gamma}} \quad\Rightarrow\quad (\dot y,\dot p) = (\dfrac{\partial H}{\partial p},-\dfrac{\partial H}{\partial y}) + (0,-\gamma p)
\end{equation}
the flow $\phi_t$ being the composition of the Hamiltonian flow $\phi^{[H]}_t$ and the flow $\phi^{[\gamma]}_t$. Now, in the case of linear dissipation, the latter flow is available exactly as $\phi^{[\gamma]}_t = (y,e^{-\gamma t}p)$. Hence the total, discrete flow of the system can be expressed as $\Phi_{\Delta t} = \Phi_{\Delta t}^{[H]}\circ\phi_{\Delta t}^{[\gamma]}$, where $\Phi_{\Delta t}^{[H]}$ is the discrete flow of the conservative (undamped) Hamiltonian system (obtained using either discretisation method). Applying this transformation to the (EC) method a new set of update equations is obtained, with an inherent `conformal symplectic' property
\begin{subequations}\label{ECconf}
	\begin{align}
		(\textrm{EC}^\textrm{[cs]}):\quad p^{n+1} &= \dfrac{1-\Delta t^2 k/4m}{1+\Delta t^2 k/4m}e^{-\gamma\Delta t}p^n - \dfrac{\Delta tk}{1+\Delta t^2 k/4m}y^n\\[0.2cm]
		y^{n+1} &= y^n + \dfrac{\Delta t}{2m}\left(p^{n+1}+e^{-\gamma\Delta t}p^n\right)
	\end{align}
\end{subequations}
where now $\mathcal{D} = e^{-\gamma\Delta t}$. The same holds for the (VV) algorithm, with the update being
\begin{equation}\label{VVconf}
\begin{split}
(\textrm{VV}^\textrm{[cs]}):\quad &p^{n+1/2} = e^{-\gamma\Delta t}p^n+\dfrac{\Delta t}{2}f^n,\\[0.2cm]
&y^{n+1} = y^n + \dfrac{\Delta t}{m}p^{n+1/2},\qquad p^{n+1} = p^{n+1/2} + \dfrac{\Delta t}{2}f^{n+1}.
\end{split}
\end{equation}
Such composition methods, apart from the design of conformal symplectic integrators, can also be used to construct higher order methods \cite{hairer_EC10,leimkuhler04,split}. However such an approach is not followed here, as it is rarely relevant in physical modelling sound synthesis applications, for the reasons explained in Section~\ref{sec;intro}.

\subsection{Numerical results}

A comparison of all the above methods, based on the metric $H_\textrm{dev}$ and the discrete conservation laws as quantified by $\mathcal{K}$ and $\mathcal{D}$ is presented on Table~\ref{tab;lin}. The evolution of the energy error $K_{err}=(K^{n+1}-K^0)/K^0$ is depicted in Figure \ref{fig;Kerr} along with the phase-space trajectory as calculated using the exact solution. Figure~\ref{fig;omega} depicts the solution trajectories in the modified phase space $\mathcal{W}=(\psi,\,\xi)$, where mappings that are not conformally symplectic (with $\mathcal{D}\neq e^{-\gamma\Delta t}$) deviate from the analytical solution.

\begin{table}[!h]
	\footnotesize
	\renewcommand{\arraystretch}{1.3}
	\caption{{Properties of numerical integrators} \label{tab;lin}}
	\centering 
	\begin{tabular}{l|  r| r| c}
		\hline
		\hline
		Method & \multicolumn{1}{c|}{$H_\textrm{dev}$} & \multicolumn{1}{c|}{$\mathcal{K}$} & \multicolumn{1}{c}{$\mathcal{D}$}\\ \hline
		EC & 6.20 & $7.26\times10^{-19}$ & $(\omega_0^2\Delta t/2 + 2 -\gamma\Delta t)/(\omega_0^2\Delta t/2 + 2 +\gamma\Delta t)$\\
		VV & 6.58 & $-6.01\times10^{-5}~$ & $(2-\gamma\Delta t)/(2+\gamma\Delta t)$\\
		CK & 2.07 & $-2.45\times10^{-5}~$ & $e^{-\gamma\Delta t}$\\ \hline
		IIM & 0.98 & $-1.41\times10^{-4}~$ & $e^{-\gamma\Delta t}$\\ \hline
		EC$^{[\mbox{cs}]}$ & 30.42 & $7.40\times10^{-5}~$ & $e^{-\gamma\Delta t}$\\
		VV$^{[\mbox{cs}]}$ & 23.05 & $5.10\times10^{-5}~$ & $e^{-\gamma\Delta t}$\\
		\hline
		\hline
	\end{tabular}
\end{table}

\begin{figure}[!h]
	\centering
	\includegraphics[width=0.86\columnwidth]{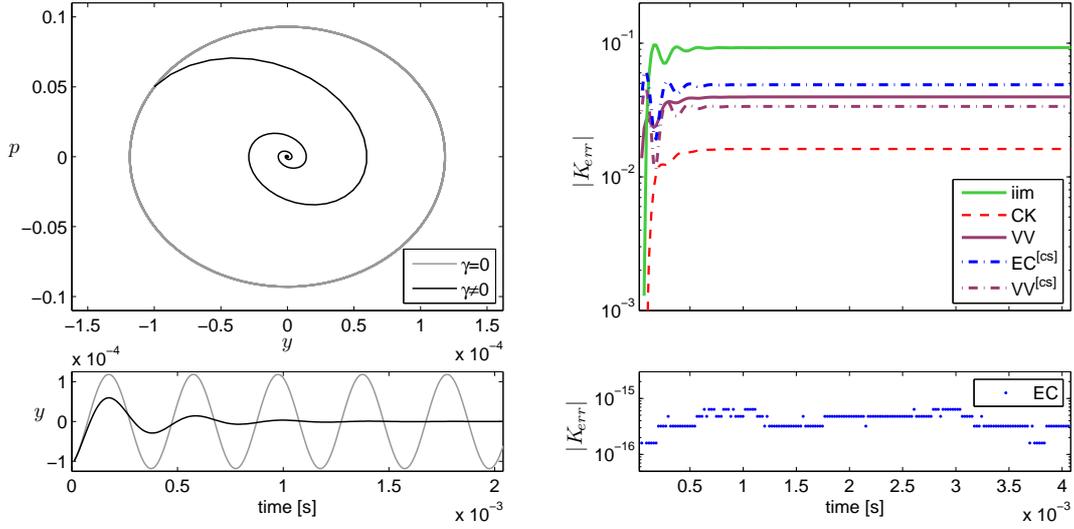}
	\caption{Left: The phase space trajectory (top) and displacement signal (bottom) calculated using the exact solution (the undamped case is plotted for comparison). Right: The energy error for all discretisation methods. Note that for the (EC) method the error remains within machine precision, exhibiting single-bit variation. \label{fig;Kerr}}
\end{figure}

\begin{figure}[!h]
	\centering
	\includegraphics[width=0.86\columnwidth]{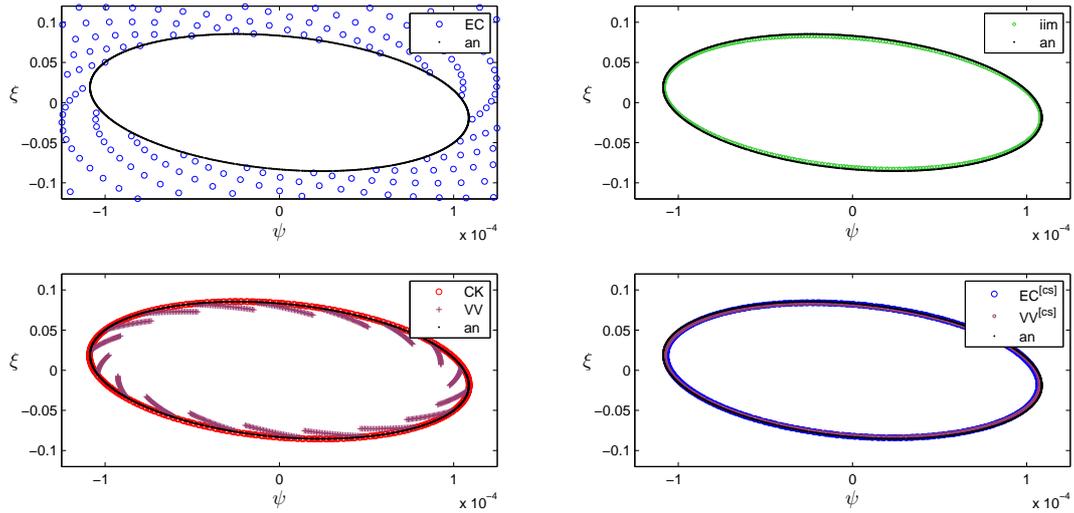}
	\caption{The trajectories of all discretisation methods and the analytical solution, plotted in the modified phase space $\mathcal{W} = (\psi,~\xi)$. \label{fig;omega}}
\end{figure}

For all simulations the physical parameters are taken from the caption of Figure~\ref{fig;ddho_NL} in Section~\ref{sec;clar}, representing a clarinet reed model (with the exemption of the damping $\gamma$, which has been increased 2.5 times to render the dissipation effect more significant). The stepsize used is $\Delta t = 1/f_s$, where $f_s = 44100$ Hz and the initial conditions are $y(0) = -0.1$ mm and $p(0) = 0.05$ kg$\,$m/s. Apart from the (EC) method, all other methods fail to respect the discrete conservation law (\ref{Hcond}). The most accurate approximations for the system energy (regarding $H_\textrm{dev}$) are obtained using the (IIM) and the (CK) methods, which are the ones that manage to replicate the symplectic dynamics of the continuous system in the discrete domain. The methods that are inherently not conformal sympletic yield worse approximations, which also deteriorate when using composition methods to recover the correct dynamics.

In the author's opinion, the advantage of the conformal symplectic methods, in terms of replicating the system dynamics, should be taken into account when designing mechanical integrators. As mentioned in Section~\ref{sec;disc} it is a more fundamental property, in comparison to the discrete conservation law, since it is uniformly defined. However such a design may not always be available for more complex problems, e.g.\ when frequency dependent damping or external forces are present. In that case preservation of a numerical energy-like quantity should be considered in order to ensure algorithm stability. Such a quantity exists, by construction, when the (EC) scheme is used, regardless of the type of the potential energy of the system. Alternative discretisation methods may yield similar results, but it is not always straightforward to identify the conserved numerical quantity.

\section{Nonlinear oscillators}\label{sec;nl}

Simulating the behaviour of a lumped oscillator becomes a more interesting problem, when nonlinear forces act on the system. In this section the effect of a non-smooth impact force is considered, that becomes active when the oscillating mass tries to exceed a certain boundary (here located at $y=0$). A common approach to simulate such forces allows a small penetration inside the `rigid' boundary, which can be equivalently considered as the compression of the impacting objects \cite{harmon,penalty}. Starting from Hertz's contact law, the impact force takes the form
\begin{equation}\label{fcoll}
f(y) = -k_c\lfloor y^\alpha\rfloor
\end{equation}
where $\lfloor y^\alpha\rfloor = h(y)\,y^\alpha$, $h(y)$ denotes the Heaviside step function, $k_c$ is a stiffness coefficient and the power law exponent $\alpha\geq1$ depends on the local shape of the contact surface \cite{papetti}. This results in the potential energy taking the form $V(y) = ky^2/2 + k_c\lfloor y^{\alpha+1}\rfloor/(\alpha+1)$ and the equation of motion for the oscillator becomes
\begin{equation}\label{dhomotNL}
m\dfrac{\textrm{d}^2y}{\textrm{d}t^2} + m\gamma\dfrac{\textrm{d}y}{\textrm{d}t} + ky + k_c\lfloor y^\alpha\rfloor= 0.
\end{equation}
This does not affect the conservation law (\ref{Hcon}), since the form of the potential energy is not used during its derivation. Hence, in the absence of an analytical solution to this problem, the numerical approximation can be assessed via the discrete conservation law (\ref{Hcond}). One should however notice that such a nonlinear potential is only $\alpha$-times differentiable at $y=0$, resulting in a decline of the accuracy of the numerical approximation, which is otherwise second-order accurate. This may lead to an energy drift (see Figure~\ref{fig;long10_per}) as explained in \cite{smith2012reflections}. However the error introduced by this effect is negligible in the systems examined in this study (namely linearly damped acoustic systems) since the energy drift is masked out by frictional losses.

The integration methods of the previous section are extended to this nonlinear problem as follows.
For the Hamiltonian integrators the updated potential energy, incorporating the nonlinear impact force, needs to be substituted in the formulation of the numerical schemes. As already mentioned in Section~\ref{sec;lumped}, in the presence of nonlinear forces, the (EC) method becomes distinct from the midpoint rule (MR) and the trapezoidal rule (TR). In all these formulations it is required to solve a nonlinear equation at each time-step, which takes the following form for each method:
\begin{align}
	& (\textrm{EC}):\quad \lambda\,\frac{V(y^n+s)-V(y^n)}{s} + (1+\gamma\Delta t/2)s - 2q^n = 0\label{nlEC}\\
	& (\textrm{MR}):\quad \lambda\,V'\Big(\frac{2y^n+s}{2}\Big) + (1+\gamma\Delta t/2)s - 2q^n = 0\\
	& (\textrm{TR}):\quad \lambda\,\frac{V'(y^n+s)+V'(y^n)}{2} + (1+\gamma\Delta t/2)s - 2q^n = 0
\end{align}
with $\lambda = \Delta t^2/(2m)$ and the unknown $s = y^{n+1} - y^{n}$. Existence and uniqueness of solutions for the above equations stem from the convexity of the potential $V$ (see \cite{jsv}). 

The same substitution of the potential energy $V$ can be applied to the (CK) method, leading to the solution of a nonlinear equation in $s$ (defined as above)
\begin{equation}\label{CKNL}
\lambda\frac{k_c}{\alpha+1}\frac{\lfloor (y^n+s)^{\alpha+1}\rfloor-\lfloor (y^n)^{\alpha+1}\rfloor}{s} + s - 2q^n e^{-\gamma\Delta t} + \frac{k\Delta t^2}{4m}(s+2y^n) = 0.
\end{equation}
For the velocity Verlet algorithm (VV) the only necessary update is applied to the force acting on the system, with $f = -ky^n - k_c\lfloor (y^n)^\alpha\rfloor$.

Similarly, for the impulse invariance method, the update equation becomes
\begin{equation}\label{iimdiscNL}
y^{n+1} = -\frac{\Delta t e^{-\gamma\Delta t/2}\sin(\omega_\gamma\Delta t)}{m\omega_\gamma}k_c\lfloor y^\alpha\rfloor + 2e^{-\gamma\Delta t/2}\cos(\omega_\gamma\Delta t)y^n -e^{-\gamma\Delta t}y^{n-1}.
\end{equation} Finally, the splitting procedure of Section~\ref{sec;split} is still applicable in the same way, regardless of the presence of nonlinear forces.

\subsection{External forces}\label{sec;clar}

Another point of interest in practical applications is the presence of external forces driving the oscillations of the system. Such a power input can be incorporated to the (continuous and discrete) conservation laws. Given an external force $f_\textrm{ex}$, Hamilton's equations take the following form
\begin{equation}\label{ddhoHeq}
\dfrac{\textrm{d}y}{\textrm{d}t} = \dfrac{\partial H}{\partial p},\qquad\dfrac{\textrm{d}p}{\textrm{d}t} = f_\textrm{ex} -\dfrac{\partial H}{\partial y} -\gamma p.
\end{equation}
with
\begin{equation}\label{dhreed}
\dfrac{\textrm{d}H}{\textrm{d}t} = -\dfrac{\gamma p^2}{m} + \dfrac{p}{m}f_\textrm{ex}.
\end{equation}
Accordingly the conservation law (\ref{Hcon}) becomes
\begin{equation}\label{dHcon}
H+\int\dfrac{\gamma p^2 - pf_\textrm{ex}}{m}\;dt = \mbox{const}.
\end{equation}
with the following discretised version 
\begin{equation}\label{dHcond}
K^{n} = H^{n+1} + \sum_{\kappa=0}^n\left(\gamma\left(\mu_{t+}p^\kappa\right)^2 - \left(\mu_{t+}p^\kappa\right)\left(\mu_{t+}f_\textrm{ex}^\kappa\right)\right)\dfrac{\Delta t}{m} = \mbox{const}.
\end{equation}
The above equations present a conserved quantity, where there is both an energy loss mechanism and a power input to the system energy $H$ due to the action of both frictional and external forces.

Applying a periodic driving force usually results in a steady-state displacement signal preceded by a transient oscillation. This is demonstrated in this section using a problem from musical acoustics. In particular, the motion of a clarinet reed is simulated and the resulting sound pressure is synthesised. The clarinet reed is driven by the pressure difference across it $p_\Delta = p_\textrm{m} - p_\textrm{in}$, where $p_\textrm{m}$ is the blowing pressure (mouthpressure) and $p_\textrm{in}$ is the pressure inside the clarinet mouthpiece (see Figure \ref{fig;ddho_NL}(b)).

\begin{figure}[!t]
	\begin{minipage}[t]{0.48\textwidth}
		\begin{center}
			(a)\\
			\includegraphics[width=.9\linewidth]{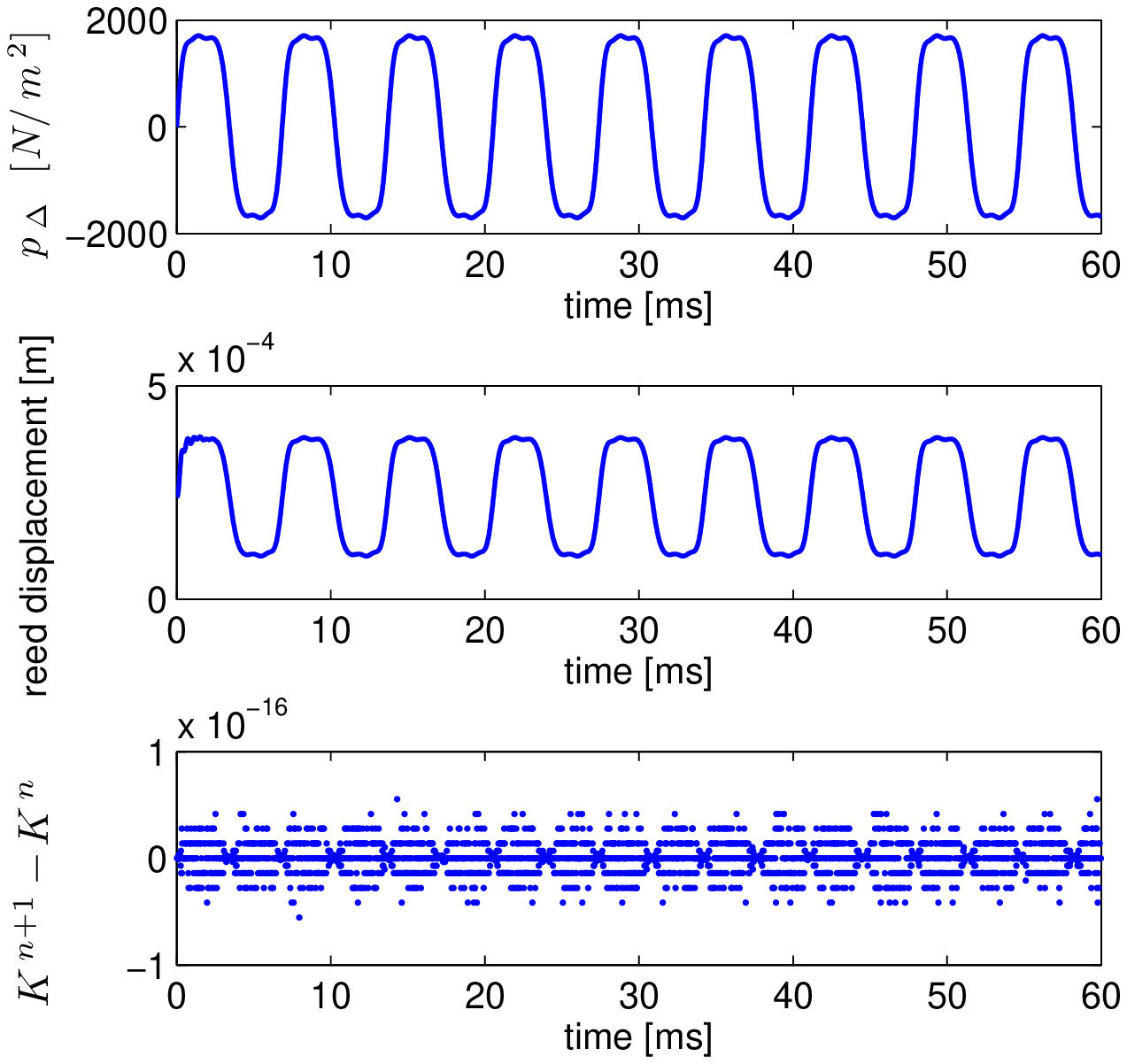}
		\end{center}
	\end{minipage}
	\hfill
	\begin{minipage}[t]{0.48\textwidth}
		\begin{center}
			\hspace{0.75cm}(b)\\
			\includegraphics[width=.9\linewidth]{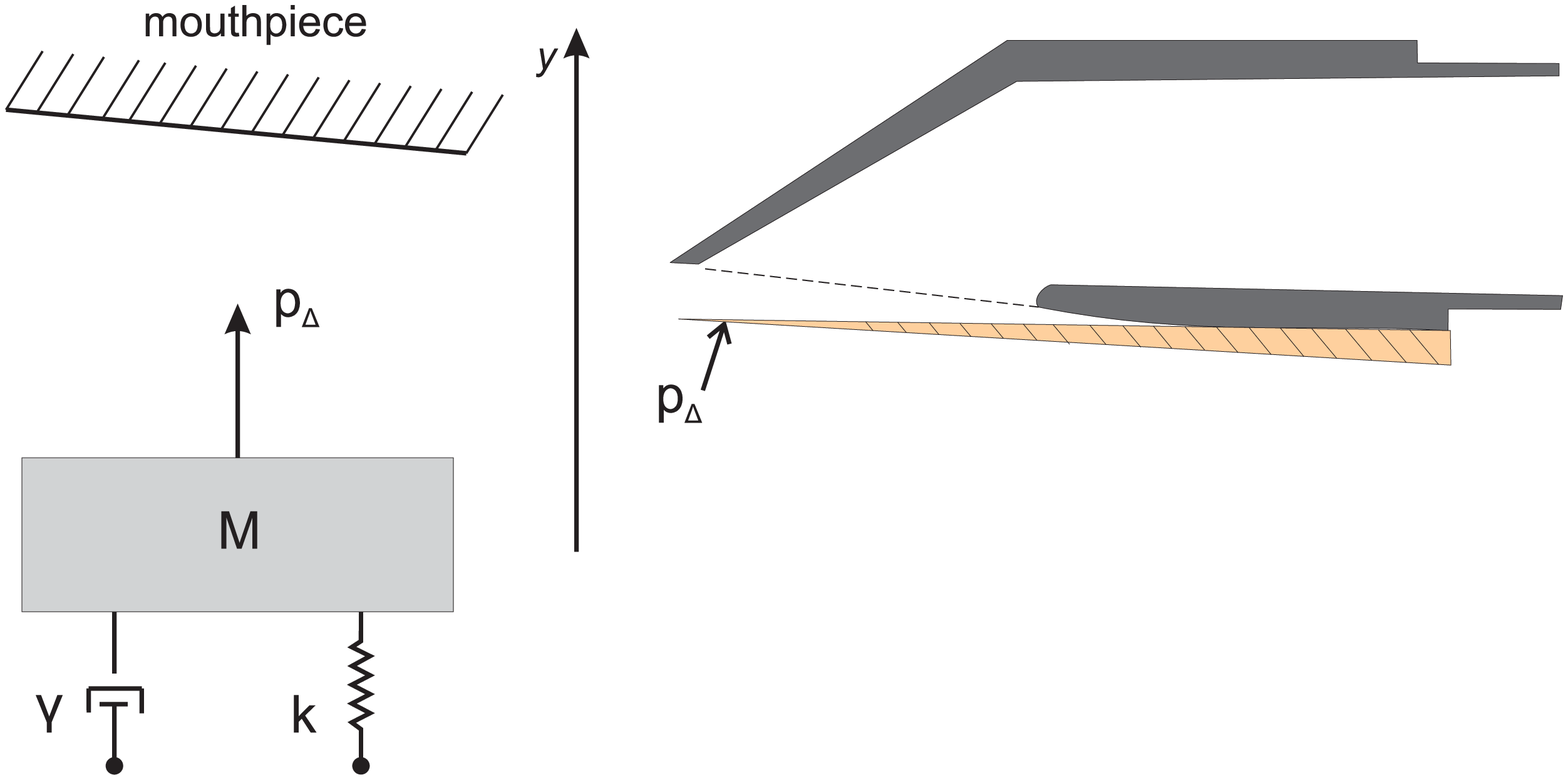}\\
			\vspace{0.1cm}
			(c)\\
			\vspace{-0.1cm}
			\includegraphics[width=1\linewidth]{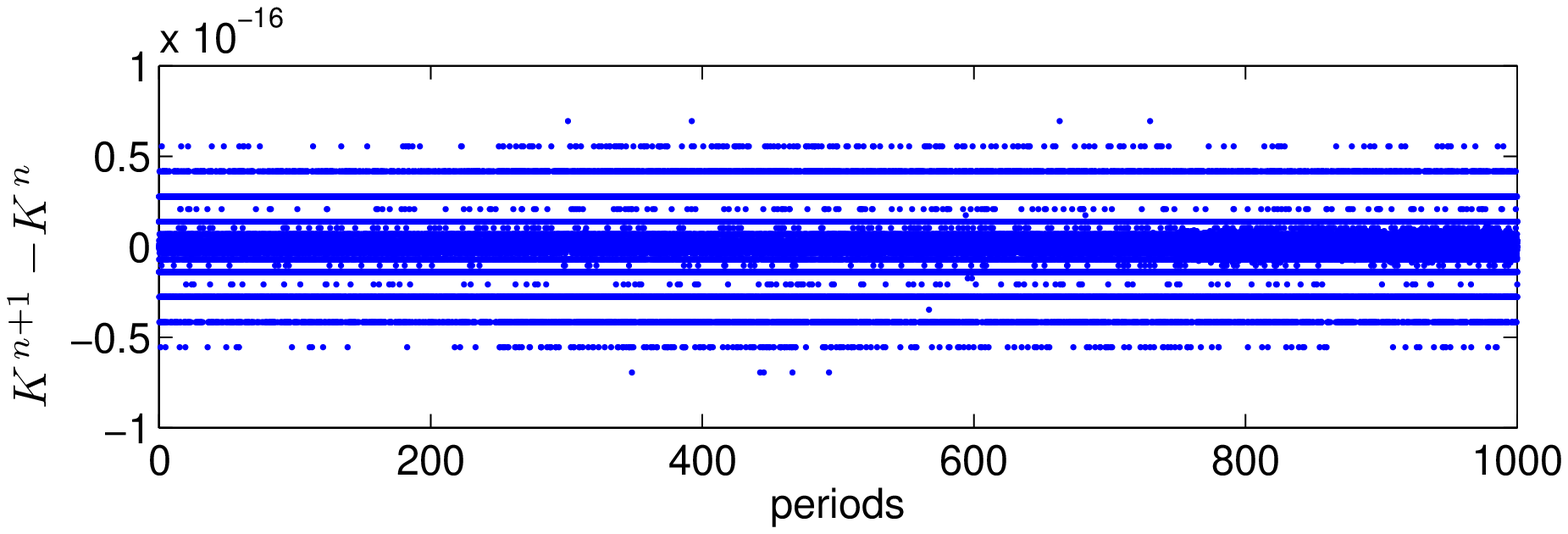}
		\end{center}
	\end{minipage}
	\caption{(a): Displacement signal for a clarinet reed driven by the pressure difference $p_\Delta$ and the respective error in the conservation law (\ref{dHcond}). (b): Lumped model of a clarinet reed with mass per unit area $M$, stiffness per unit area $k = M\omega_0^2$ and damping $\gamma$. (c): Energy error during a longer time interval simulation. The reed parameters are $\omega_0 = 5000\pi$~Hz, $\gamma = 2800$~s$^{-1}$, $y_c = 2.4\times10^{-4}$~m, $M = 0.05$~kg/m$^2$, $k_c = 10^{12}$~Pa/m$^\alpha$ and $\alpha = 1.5$.}
	\label{fig;ddho_NL}
\end{figure}

If this is considered given in the form of a time series $p_\Delta(t)$, it is possible to sample it and calculate the force driving the reed, in order to simulate its oscillations. Defining $M$ as the mass per unit area of the reed, the equation of motion for a lumped reed model becomes
\begin{equation}\label{reedmot}
M\dfrac{\textrm{d}^2y}{\textrm{d}t^2} + M\gamma\dfrac{\textrm{d}y}{\textrm{d}t} + M\omega_0^2 y + k_c\lfloor y-y_c\rfloor^\alpha= p_\Delta
\end{equation}
where $k_c$ is now defined as contact stiffness per unit area and $y_c$ is the point after which the reed-mouthpiece interaction becomes significant \cite{aca12}. The driving force per unit area corresponds to the pressure difference across the reed $p_\Delta$. The results of such a simulation, implemented by solving (\ref{ddhoHeq}) at each time step, using the (EC) method, are shown in Figure \ref{fig;ddho_NL}(a). The pressure difference $p_\Delta$ is synthesised taking into account a typical clarinet spectrum at 146 Hz (Note D3) \cite{fr}, by defining the amplitudes of the first seven harmonics as $\{A_1, A_2,\ldots A_7\} = \{2000, 40, 400, 40, 100, 40, 28\}$ N/m$^2$. The reed parameters are given in the figure caption, the audio sampling rate is used ($f_s = 44.1$ kHz) and rest initial conditions are assumed. Note that, due to zero initial conditions, $K^0 = 0$ and the energy error is defined as $K^{n+1} - K^n$.

In this case, the steady power input from the external force (due to $p_\Delta$) allows the simulation of longer time intervals, without the oscillations dying out, as was the case in the previous section. The energy error for simulating 1000 periods (equivalent to 302085 samples in 6.85 sec for this system, an unusually long duration for musical tones) is shown in Figure \ref{fig;ddho_NL}(c). It can be observed that the conserved quantity $K$ remains constant, within machine precision. This, apart from bounding the solutions of the numerical approximation (and thus ensuring numerical stability) also shows that no `artificial' energy is fed into or lost from the system, besides that due to the external driving and frictional forces.

\section{Extension to distributed systems}\label{sec;dist}

The methodology presented above can be extended to distributed systems by considering infinite dimensional dynamical systems. To this cause the vibration of an ideal string bouncing on a rigid obstacle is considered. The collision force is thus nonlinear, and linear damping is also added to the string model. A similar conservation law, like the one from Section~\ref{sec;lumped} in the lumped case, is obtained for this system. The numerical discretisation is carried out using the (CK) method, which is usually used in lumped models. In a (lossless) Hamiltonian framework the (EC) method has been used to simulate a stiff string \cite{jsv} and also extended to the lossy case \cite{isma14}, whereas discrete gradients \cite{chabassier_cmame} and finite difference methods \cite{stefan_strings} have been used to model nonlinear strings.

\subsection{Lagrangian formulation}\label{sec;lag}
Let a stiff string of length $l$, simply supported at both ends and with given initial displacement $y(x)$ interact with a flat, rigid barrier located below it at height $y_\mathrm{b}$. The Lagrangian density of this system is given by the difference between the kinetic and potential energy density, as
\begin{equation}
\mathcal{L} = \mathcal{T} - \mathcal{V} = \rho A(\partial_t y)^2/2 - \Big(\mathcal{V}_\mathrm{\tau} + \mathcal{V}_\mathrm{s} + \mathcal{V}_\mathrm{b}\Big)
\end{equation}
where $\rho$ is the mass density and $A$ the cross-sectional area of the string.
\begin{equation}
\mathcal{V}_\mathrm{\tau} = \tau(\partial_x y)^2/2,\qquad \mathcal{V}_\mathrm{s} = EI(\partial_{xx} y)^2/2\qquad\textrm{and}\qquad \mathcal{V}_\mathrm{b} = k_\mathrm{b}\lfloor(y_\mathrm{b} - y)^{\alpha+1}\rfloor/(\alpha+1)
\end{equation}
are respectively the potential energies due to the string tension $\tau$, string stiffness $EI$ and the collision potential due to interaction with the barrier. Hence the Lagrangian density is a function of the displacement variable $y(x,t)$ and its space-time derivatives, given by
\begin{equation}\label{Ldensity}
\mathcal{L} = \frac{\rho A}{2}y_t^2 - \frac{\tau}{2}y_x^2 - \frac{EI}{2}y_{xx}^2 - \frac{k_\mathrm{b}\lfloor(y_\mathrm{b} - y)^{\alpha+1}\rfloor}{\alpha+1}
\end{equation}
and the Lagrangian of the system is
\begin{equation}\label{L}
L = \int_0^l\mathcal{L}(y,y_t,y_x,y_{xx};x,t)\,dx
\end{equation}
where the following notation is adopted:
\begin{equation}\label{def_x}
y_t = \partial_t y = \partial y/\partial t,\qquad
y_x = \partial_x y = \partial y/\partial x,\qquad
y_{xx} = \partial_{xx} y = \partial^2 y/\partial x^2.
\end{equation}
The variation of the Lagrangian density subject to a virtual displacement $\delta y$ is
\begin{equation}\label{deltaLdensity}
\delta\mathcal{L} = \delta y \frac{\partial \mathcal{L} }{\partial y} + \frac{\partial\delta y}{\partial t}\frac{\partial \mathcal{L} }{\partial y_t} + \frac{\partial\delta y}{\partial x}\frac{\partial \mathcal{L} }{\partial y_x} +\frac{\partial^2\delta y}{\partial x^2}\frac{\partial \mathcal{L} }{\partial y_{xx}}
\end{equation}
and Hamilton's principle of least action \cite{arnold78} dictates that
\begin{equation}\label{Hprince}
\delta\int L\,dt=0~~~\Rightarrow~~~\delta\iint\mathcal{L}\:dx\,dt = 0.
\end{equation}
Substituting the expression in equation~(\ref{deltaLdensity}) and using integration by parts, along with the fact that $\delta y$ vanishes at the integration boundaries \cite{lanczos}, yields
\begin{equation}\label{intEL}
\iint \delta y\left(\frac{\partial \mathcal{L}}{\partial y} - \frac{\partial}{\partial t}\frac{\partial \mathcal{L}}{\partial y_t} - \frac{\partial}{\partial x}\frac{\partial \mathcal{L}}{\partial y_x} +\frac{\partial^2}{\partial x^2}\frac{\partial \mathcal{L}}{\partial y_{xx}}\right)dx\,dt = 0.
\end{equation} 
The requirement for the integral to be zero for an arbitrary variation $\delta y$ results in the Euler-Lagrange equation for the given dynamical system, which is a partial differential equation of the form
\begin{equation}\label{EL}
\frac{\partial \mathcal{L}}{\partial y} = \frac{\partial}{\partial t}\left(\frac{\partial \mathcal{L}}{\partial y_t}\right) + \frac{\partial}{\partial x}\left(\frac{\partial \mathcal{L}}{\partial y_x}\right) -\frac{\partial^2}{\partial x^2}\left(\frac{\partial \mathcal{L}}{\partial y_{xx}}\right).
\end{equation}
The Hamiltonian density $\mathcal{H}$ can be obtained by defining the conjugate momentum
\begin{equation}\label{p}
p=\frac{\partial \mathcal{L}}{\partial y_t} = \rho Ay_t
\end{equation}
and taking the Legendre transformation of the Lagrangian density
\begin{equation}\label{legendre}
\begin{split}
\mathcal{H} = y_tp-\mathcal{L}(y,y_t,y_x,y_{xx})
&=\frac{1}{2}\frac{p^2}{\rho A} + \frac{1}{2}\tau y_x^2 + \frac{1}{2}EIy_{xx}^2 +\frac{k_\mathrm{b}}{\alpha+1}\lfloor(y_\mathrm{b}-y)^{\alpha+1}\rfloor\\[0.2cm]
&= \mathcal{T}(p) + \mathcal{V}_\mathrm{\tau}(y_x) + \mathcal{V}_\mathrm{s}(y_{xx}) + \mathcal{V}_\mathrm{b}(y).
\end{split}
\end{equation}
Equations (\ref{EL}), (\ref{p}) and (\ref{legendre}) can be combined to formulate Hamilton's equations of motion, which with the inclusion of a linear damping term are:
\begin{subequations}\label{Heq}
	\begin{align}
		\frac{\partial p}{\partial t} &= \frac{\partial}{\partial x}\left(\frac{\partial\mathcal{H}}{\partial y_x}\right) - \frac{\partial^2}{\partial x^2}\left(\frac{\partial\mathcal{H}}{\partial y_{xx}}\right) - \frac{\partial\mathcal{H}}{\partial y} -\gamma p\label{Heq1}\\
		\frac{\partial y}{\partial t} &= \frac{\partial\mathcal{H}}{\partial p}.\label{Heq2}
	\end{align}
\end{subequations}
The Hamiltonian (total energy) of the system is computed as
\begin{equation}\label{Hint}
H = \int_0^l\mathcal{H}(y,p,y_x,y_{xx})\;dx
\end{equation}
with the conserved quantity being equal to
\begin{equation}\label{Hdistcon}
\int_0^l\left(\mathcal{H}+\int\dfrac{\gamma p^2}{m}\;dt\right)\;dx = \mbox{const}.
\end{equation}
In what follows, an ideal string will be considered (by setting $EI = 0$, which results in $\mathcal{V}_\mathrm{s} = 0$) so that a comparison with an analytic result will be possible.

\subsection{Caldirola-Kanai formalism and discretisation}\label{sec;CKdisc}

Transforming to the Caldirola-Kanai formalism the following Hamiltonian density is defined
\begin{equation}\label{HCKdist}
\mathcal{H}_\mathrm{CK} = e^{-\gamma t}\frac{\varpi^2}{2m} + e^{\gamma t}\left(\frac{1}{2}\tau y_x^2+\frac{k_\mathrm{b}}{\alpha+1}\lfloor(y_\mathrm{b}-y)^{\alpha+1}\rfloor\right)
\end{equation}
where $\varpi = e^{\gamma t}m\dot y$, with
\begin{equation}\label{HeqCKd}
\dfrac{\textrm{d}y}{\textrm{d}t}=\dfrac{\partial \mathcal{H}_{CK}}{\partial \varpi},\qquad \dfrac{\textrm{d}\varpi}{\textrm{d}t}=-\dfrac{\partial \mathcal{H}_{CK}}{\partial y}.
\end{equation}
Space and time discretisation is carried out by denoting the value of variable $y$ at position $x = m\Delta x$ and time $t = n\Delta t$ by $y_m^n$, $\Delta x$ being the spatial sampling interval (see Figure \ref{fig;grid}):
\begin{subequations}\label{stringHdisc}
	\begin{align}
		\frac{y_m^{n+1}-y_m^n}{\Delta t}= e^{-\gamma (n+\frac{1}{2})\Delta t}&\dfrac{\varpi_m^{n+1} - \varpi_m^n}{2\rho A}\label{stringHdisc1}\\[0.2cm]
		\begin{split}
			\frac{\varpi_m^{n+1}-\varpi_m^n}{\Delta t} =e^{\gamma (n+\frac{1}{2})\Delta t}&\Bigg(\dfrac{\tau}{2}\delta_\Delta(y_m^{n+1}+y_m^n)\\
			&- \frac{k_b}{\alpha+1}\dfrac{\lfloor(y_{b}-y_m^{n+1})^{\alpha+1}\rfloor - \lfloor(y_{b}-y_m^n)^{\alpha+1}\rfloor}{y_m^{n+1}-y_m^n}\Bigg)
		\end{split}
	\end{align}
\end{subequations}
where $\delta_\Delta(y_m^n) = \delta_{x+}\delta_{x-} y_m^n$, with
\begin{equation}\label{defdelta}
\delta_{x+}y_m^n = \frac{y_{m+1}^n-y_m^n}{\Delta x},\qquad\delta_{x-}y_m^n = \frac{y_m^n-y_{m-1}^n}{\Delta x}.
\end{equation}
\begin{figure}[!t]
	\centering
	\includegraphics[width=0.5\columnwidth]{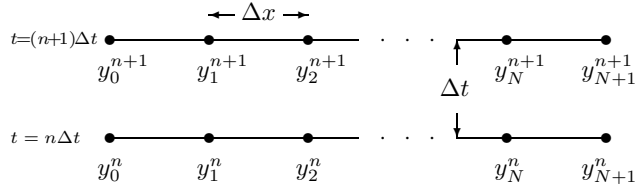}
	\caption{The discretised string at times $n\Delta t$ and $(n+1)\Delta t$.\label{fig;grid}}
\end{figure}
Scheme (\ref{stringHdisc}) is centred at time $t=(n+1/2)\Delta t$ and position $x=m\Delta x$. Defining a `normalised momentum' variable $q_m^n = \varpi_m^ne^{-\gamma n\Delta t}\Delta t/(2\rho A)$ it can be written in matrix form as
\begin{subequations}\label{stringM}
	\begin{align}
		&\mathbf{y}^{n+1}- \mathbf{y}^n =  r\mathbf{q}^{n+1} +  r^{-1}\mathbf{q}^n\\[0.2cm]
		& r\mathbf{q}^{n+1} -  r^{-1}\mathbf{q}^n = \beta_2 \mathbf{D}_2\left( \mathbf{y}^{n+1} +  \mathbf{y}^n\right) -\zeta \mathbf{S}^{-1}\left(\lfloor( \mathbf{y_b}- \mathbf{y}^{n+1})^{\alpha+1}\rfloor - \lfloor( \mathbf{y_b}- \mathbf{y}^n)^{\alpha+1}\rfloor\right)
	\end{align}
\end{subequations}
where $\mathbf{S}=\diag( \mathbf{y}^{n+1} -  \mathbf{y}^n)$ is a diagonal matrix,
\begin{equation}\label{phidef}
r = e^{\gamma\Delta t/2},\qquad\beta_2 = \frac{\tau\Delta t^2}{4\rho A\Delta x^2},\qquad\zeta=\frac{k_b\Delta t^2}{2\rho A(\alpha+1)}
\end{equation}
and $ \mathbf{y}^n$, $ \mathbf{y_b}^n$ and $ \mathbf{q}^n$ are column vectors holding displacement, barrier profile and normalised momentum values. Under the assumption of simply supported boundary conditions on both ends of the system, these vectors hold the values of $N$ interior nodes on the string (i.e.\, from $y_1$ to $y_{N}$), and $ \mathbf{D}_2$ is then an $N\times N$ tridiagonal matrix:
\begin{equation}\label{D2}
\mathbf{D}_2 = \left[\begin{array}{cccc}
-2 & 1 & & 0\\
1 & \ddots & \ddots & \\
& \ddots & \ddots & 1 \\
0 & & 1 & -2 \end{array} \right]
\end{equation}
which implements the second spatial derivative of the string state. For alternative types of boundary conditions see \cite{jsv}, where a similar discretisation approach is carried out on a conservative system using the (EC) method. Singularities in the diagonal matrix $ \mathbf{S}$ can be avoided by considering only the vibrating portion of the string (with $y_m^{n+1} \neq y_m^n$; otherwise $y_m^{n+1}$ is known and $\varpi_m^{n+1}$ can be obtained from (\ref{stringHdisc1})). Now setting
\begin{equation}\label{stringx}
\mathbf{s}= \mathbf{y}^{n+1}- \mathbf{y}^n= r\mathbf{q}^{n+1}+ r^{-1}\mathbf{q}^n
\end{equation}
yields the nonlinear system of equations
\begin{equation}\label{Fstring}
\begin{split}
\mathbf{F} = &\left( \mathbf{I} -  \beta_2\mathbf{D}_2\right) \mathbf{s} - 2\left( \beta_2\mathbf{D}_2 \mathbf{y}^n +  r^{-1}\mathbf{q}^n\right)\\[0.2cm]
{} &+\zeta \mathbf{S}^{-1} \left(\lfloor( \mathbf{y_b}- \mathbf{y}^n- \mathbf{s})^{\alpha+1}\rfloor - \lfloor( \mathbf{y_b}- \mathbf{y}^n)^{\alpha+1}\rfloor\right) =  \mathbf{0}.
\end{split}
\end{equation}
This can be solved for $\mathbf{s}$ using the multidimensional Newton method. The Jacobian of $ \mathbf{F}$ is
\begin{equation}\label{jacobian}
\mathbf{J} =  \mathbf{I} - \beta_2\mathbf{D}_2 + \mathbf{C}
\end{equation}
where $ \mathbf{C}$ is a diagonal matrix with elements
\begin{equation}\label{cij}
\{c_{i,i}\} = \frac{\Delta t^2}{2\rho A}\frac{s_i\,\mathcal{V}_b'(y_i^n+s_i) - \mathcal{V}_b(y_i^n+s_i) + \mathcal{V}_b(y_i^n)}{s_i^2}
\end{equation}
where $\mathcal{V}_b'$ signifies taking the derivative of $\mathcal{V}_b$ with respect to displacement. It can be shown that $\mathbf{J}$ is positive definite \cite{jsv}, which ensures the uniqueness of a root of equation~(\ref{Fstring}) \cite{deuflhard}. Furthermore $\mathbf{J}$ is also an $M$-matrix, which guarantees global convergence of the Newton method for finding the roots of the componentwise convex function $ \mathbf{F}$ \cite{ortega}.

Solving (\ref{Fstring}) $\mathbf{y}^{n+1}$ and $\mathbf{q}^{n+1}$ can be updated as
\begin{equation}\label{updates}
\mathbf{y}^{n+1} = \mathbf{y}^n + \mathbf{s},\qquad\mathbf{q}^{n+1} = r^{-1}\mathbf{s} - r^{-2}\mathbf{q}^n.
\end{equation}

In accordance to the lumped case, the energy density $\mathcal{H}$ can be calculated from the Hamiltonian density $\mathcal{H}_\mathrm{CK}$ as $\mathcal{H} = e^{-\gamma t}\mathcal{H}_\mathrm{CK}$ and the total energy of the system is given by integration along the length of the string \cite{jsv}:
\begin{equation}\label{numenergy}
H^n = b\Big[ (\mathbf{q}^n)^t \mathbf{q}^n - (\mathbf{y}^n)^t \beta_2\mathbf{D}_2 \mathbf{y}^n + \zeta \mathbf{1}^t\lfloor( \mathbf{y_b}- \mathbf{y}^n)^{\alpha+1}\rfloor\Big]
\end{equation}
with $ \mathbf{1}=(1,\ldots,1)^t$ and $b=2\rho A\Delta x/\Delta t^2$.
Furthermore, from the definition of the Hamiltonian system (\ref{HeqCKd}), it follows that
\begin{equation}\label{ECK}
\frac{\mathrm{d}\mathcal{H}}{\mathrm{d} t} = \frac{\mathrm{d}}{\mathrm{d} t}\left(e^{-\gamma t}\mathcal{H_\mathrm{CK}}\right) = -\frac{4\gamma \rho A}{\Delta t^2}q^2,\qquad \mathrm{with}\quad q = \varpi e^{-\gamma t}\Delta t/(2\rho A) 
\end{equation}
and the discrete conserved quantity becomes
\begin{equation}\label{Kstring}
K^{n} = H^{n+1} + \sum_{\kappa=0}^n 2b\gamma\Delta t(\mathbf{q}^n)^t \mathbf{q}^n = \mbox{const}.
\end{equation}

In the absence of losses (for $\gamma = 0$) an analytical result \cite{cabannes} states that if a straight obstacle is placed at half the amplitude of the string vibration, then the period of the oscillation will become 1.5 times larger \cite{han_cable}. This is verified in Figure \ref{fig;cab}(a), where for increasing sampling rates this behaviour is reproduced. A 0.7 meter long string is simulated, with tension $\tau = 100$ N, and linear mass density $\rho A = 0.001$ kg$\,$m$^{-1}$; $k_b = 10^7$ and $\alpha = 1$ are used to model a rigid obstacle and $\Delta x = 0.007$ m. Figure~\ref{fig;cab}(c) shows the error in the conservation of energy (which is constant in this case) defined as $e^n = (H^n - H^0)/H^0$. On the right hand side of the same figure it is shown how including losses affects the system (with $\gamma = 200$ s$^{-1}$). Figure~\ref{fig;cab}(b) depicts the mid-point displacement of the string (calculated using a sampling rate of 44.1 kHz) and Figure \ref{fig;cab}(d) plots the energy components and the conserved quantity $K$.

\begin{figure}[!t]
	\centering
	\includegraphics[width=0.99\columnwidth]{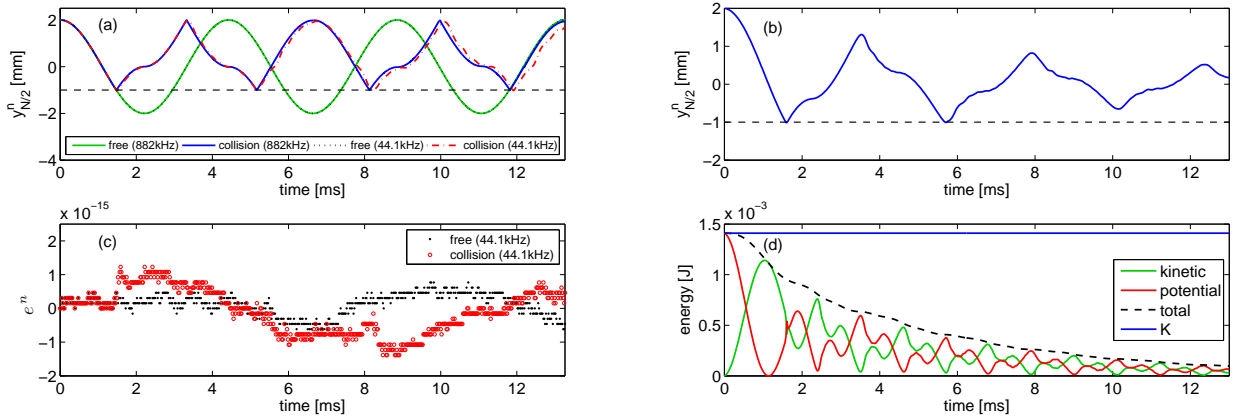}
	\caption{Simulation of an ideal string bouncing on a flat, rigid obstacle (located at $y_b = -0.0001$ m), with initial conditions $y(x,0) = 0.0002\sin(\pi x/l)$. (a): Mid-point displacement of a lossless impeded string, compared to a free vibrating string. (b): Mid-point displacement of a damped string. (c): Error in the conservation of energy for the lossless string. (d): Energy components and the conserved quantity $K$ for the lossy string.\label{fig;cab}}
\end{figure}

Finally, for the lossless case it is possible to run the simulation for longer time intervals, without the oscillations decaying to zero. Figure \ref{fig;long10_per} depicts the energy error in terms of both the deviation from the initial energy as well as the error per time-step for a total of 100 seconds (14500 periods of oscillation) with $f_s=44.1$ kHz. A small energy drift is observed in the top plot, as discussed in Section~\ref{sec;nl}. However, in the presence of losses, such a drift (whether adding to or removing energy from the system) is negligible compared to the loss mechanism caused by damping effects.

\begin{figure}[h]
	\centering
	\includegraphics[width=0.6\columnwidth]{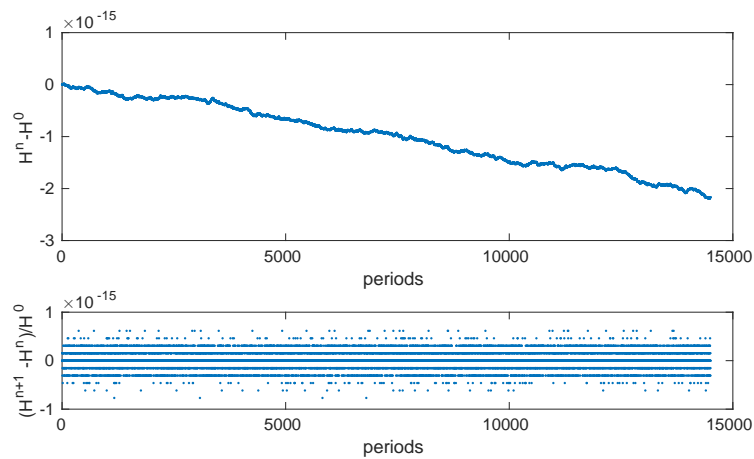}
	\caption{Simulation of a lossless vibrating string. Top: Energy error. Bottom: Normalised error per time-step.\label{fig;long10_per}}
\end{figure}

\section{Conclusion}\label{sec;conc}

This paper has provided an analysis of discretisation algorithms for non-conservative acoustic systems from the perspective of energy-conserving and symplectic schemes. Existing schemes have been analysed and novel ones have been formulated in an attempt to respect the contraction of the symplectic structure. It has been shown that existing methods, such as the impulse invariance method (used in digital signal processing) and the Caldirola-Kanai approach (used in quantum mechanics) offer an exact discretisation of the symplectic form of the system. Mechanical integrators sharing the same property have been also generated from other Hamiltonian integrators using splitting methods. The accuracy of a series of algorithms has been assessed in comparison with an analytic solution for a linear oscillator.

When nonlinear forces are acting on the system, numerical measures may be used to evaluate the numerical schemes. This has been exemplified for a lumped collision model. Similar to energy loss due to damping, it has been shown how power input due to external forces may be incorporated to the presented analysis. As a case study the motion of a clarinet reed has been simulated, with the output respecting the underlying discrete conservation law. Finally it has been shown how to extend the proposed methodology to distributed systems by presenting an application of the Caldirola-Kanai method to the simulation of a vibrating string involving a nonlinear interaction. The numerical simulations were able to reproduce analytical results available for a particular setting, when a rigid barrier is placed halfway across the amplitude of a lossless vibrating string.

The presented analysis naturally extends to other nonlinear interactions that take place in musical instruments, such as the piano hammer-string interaction and mallet impacts on a membrane. Furthermore, in the area of speech synthesis, modelling the collision of the vocal folds, where impact damping is expected to be rather high, could also be treated using the proposed approach.



\bibliographystyle{plain}
\bibliography{chatzi20}

\end{document}